%
\documentclass[runningheads]{llncs}
\usepackage{graphicx}
\usepackage{amsmath}
\usepackage{amsfonts}
\usepackage{appendix}
 \usepackage{booktabs} 
 \usepackage{cite}
\usepackage{tabularx}
\usepackage{makecell}
\usepackage{xcolor}
\usepackage{subfig}
\usepackage{comment} 
\usepackage{multirow}
\usepackage{threeparttable}
\usepackage{graphicx}
\newsavebox{\measurebox}
\let\llncssubparagraph\subparagraph
\let\subparagraph\paragraph
\usepackage[compact]{titlesec}
%
\usepackage{hyperref}
\let\subparagraph\llncssubparagraph

\newcommand{\myparagraph}[1]{\vspace{5pt}\noindent\textbf{#1}}

\newcommand{\paced}{$(b, N, \mathcal{T}, \mathcal{D})\ $}
\newcommand{\pt}{$\mathcal{T} = (c, t, r, \alpha, q, d)\ $}
\newcommand{\pd}{$\mathcal{D} = (\gamma, \eta, \lambda)\ $}

\newcommand{\h}{\textsf{H}}
\newcommand{\outcomm}{\lceil cn\rceil}
\newcommand{\allo}{\textit{All-O}}
\newcommand{\allc}{\textit{All-C}}
\newcommand{\alld}{\textit{All-D}}

\newcommand{\POM}{$\mathsf{POM}\ $}

\begin{document}
%
\title{ACeD: Scalable Data Availability Oracle}
%
%

\author{Peiyao Sheng\inst{1}\thanks{The first two authors contributed equally to this work.}
\and
Bowen Xue\inst{2}{$^\star$}
 \and
Sreeram Kannan \inst{2}
\and 
 Pramod Viswanath\inst{1}
}
\authorrunning{P.\ Sheng et al.}
%
\institute{University of Illinois, Urbana-Champaign IL 61801  \and
University of Washington at Seattle, WA}
\maketitle              

\begin{abstract}
    A popular method in practice offloads computation and storage in blockchains by relying on committing only hashes of off-chain data into the blockchain. This mechanism is acknowledged to be vulnerable to a stalling attack:  the blocks corresponding to the committed hashes may be unavailable at any honest node. The straightforward solution of broadcasting all blocks to the entire network sidesteps this data availability attack, but it is not scalable.  In this paper, we propose  ACeD, a  scalable solution to this data availability problem with $O(1)$ communication efficiency, the first to the best of our knowledge. 
    The key innovation is a new protocol that requires each of the $N$ nodes to receive only $O(1/N)$ of the block, such that the data is guaranteed to be available in a distributed manner in the network. Our solution creatively integrates coding-theoretic designs inside of Merkle tree commitments to guarantee efficient and tamper-proof reconstruction; this solution is distinct from Asynchronous Verifiable Information Dispersal \cite{cachin2005asynchronous} (in guaranteeing efficient proofs of malformed coding) and Coded Merkle Tree \cite{yu2020coded} (which only provides guarantees for random corruption as opposed to our guarantees for worst-case corruption). We implement ACeD with full functionality in 6000 lines of {\sf Rust} code, integrate the functionality as a smart contract into Ethereum via a high-performance implementation demonstrating up to 10,000 transactions per second in throughput and 6000x reduction in gas cost on the Ethereum testnet Kovan. Our code is available in \cite{aced}.

    

\end{abstract}

%


\section{Introduction}
    Public blockchains such as Bitcoin and Ethereum have demonstrated themselves to be secure in practice (more than a decade of safe and live operation in the case of Bitcoin), but at the expense of poor performance (throughput of a few transactions per second and hours of latency). Design of high performance (high throughput and low latency) blockchains without sacrificing security has been a major research area in recent years, resulting in new proof of work \cite{bagaria2019prism,yang2019prism,yu2020ohie,fitzi2018parallel}, proof of stake \cite{gilad2017algorand,daian2019snow,kiayias2017ouroboros,david2018ouroboros,badertscher2018ouroboros}, and hybrid \cite{buterin2017casper,pass2017hybrid} consensus protocols. These solutions entail a wholesale change to the core blockchain stack and existing blockchains can only potentially upgrade with very significant practical hurdles (e.g.: hard fork of existing ledger). To address this concern, high throughput scaling solutions are explored via ``layer 2" methods, including payment channels \cite{decker2015fast,miller2017sprites} and state channels \cite{poon2017plasma,teutsch2019scalable, kalodner2018arbitrum}. These solutions involve ``locking" a part of the ledger on the blockchain and operating on this trusted, locked state on an application layer outside the blockchain; however the computations are required to be semantically closely tied to the blockchain (e.g.: using the same native currency for transactions) and  the locked nature of the ledger state leads to limited applications (especially, in a smart contract platform such as Ethereum). 
   \begin{figure}
\centering
\sbox{\measurebox}{%
    \begin{minipage}[b]{.43\textwidth}
    \centering
    \subfloat
    []
    {\label{fig:two-layer}\includegraphics[width=\textwidth]{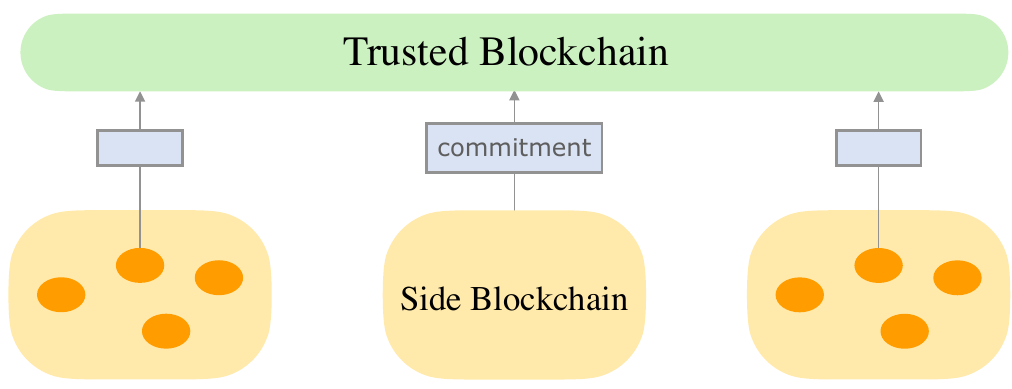}}

    \subfloat
    []
    {\label{fig:three-layer}\includegraphics[width=\textwidth]{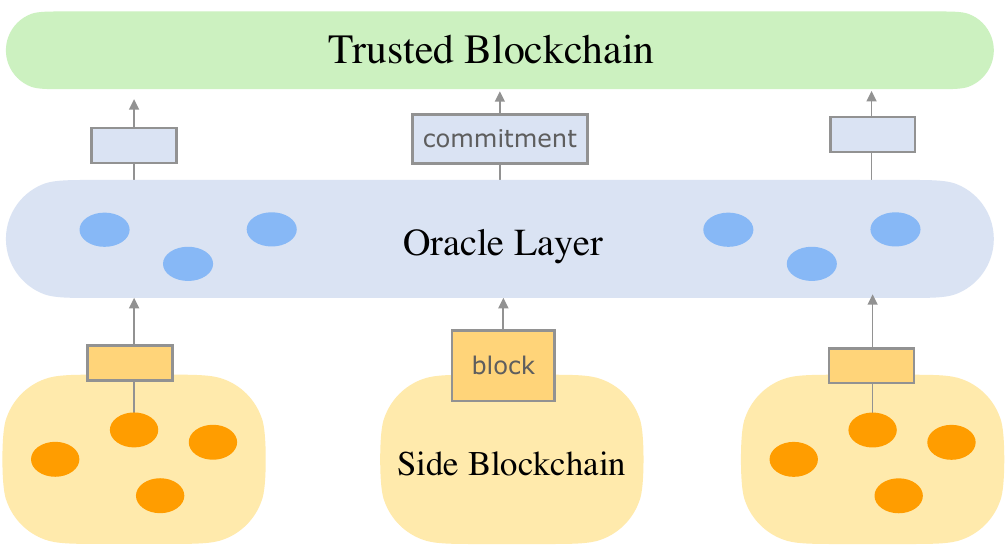}}
    \end{minipage} 
}
\usebox{\measurebox}
\begin{minipage}[b][\ht\measurebox]{.51\textwidth}
  \subfloat
    []
    {\label{fig:ACeD}\includegraphics[width=\textwidth]{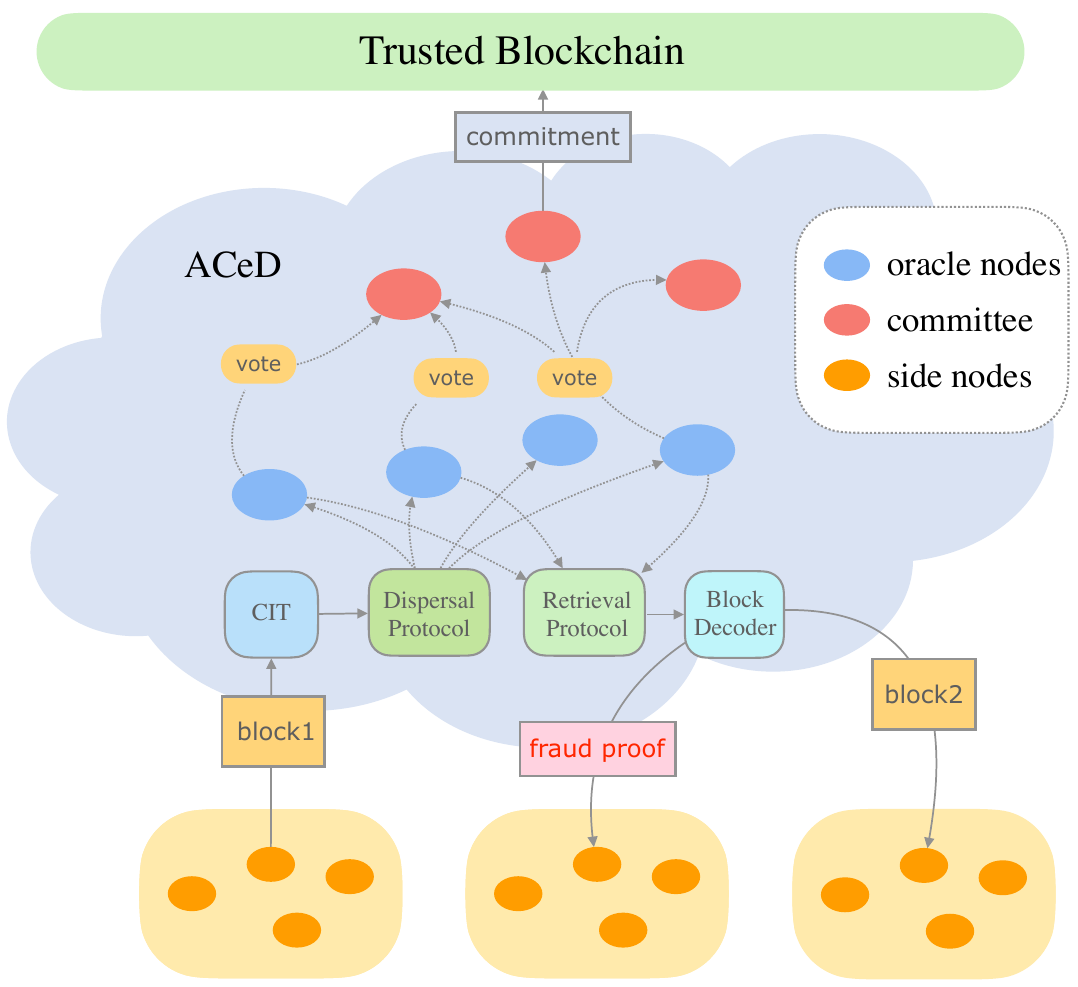}}
  \end{minipage}
\caption{(a) Side blockchains commit the hashes of blocks to a larger trusted blockchain. (b) An oracle layer is introduced to ensure data availability. (c) ACeD is a data availability oracle.}
\end{figure}    
   
    In practice, a popular form of scaling blockchain performance is via the following: a smaller blockchain  (henceforth termed ``side blockchain") derives trust from a larger blockchain (henceforth termed ``trusted blockchain") by committing the hashes of its blocks periodically to the trusted  blockchain (Fig.\ref{fig:two-layer}). 
    The ordering of blocks in the side blockchain is now determined by the order of the hashes in the trusted blockchain; this way the security of the side blockchain is directly derived from that of the trusted blockchain. This mechanism is simple, practical and efficient -- a single trusted blockchain can cheaply support a large number of side blockchains, because it does not need to store, process or validate the semantics of the blocks of the side blockchains, only storing the hashes of them. It is also very popular in practice, with several side blockchains running on both Bitcoin and Ethereum; examples include donation trace (Binance charity \cite{foti2020blockchain}) and  diplomas and credentials verification (BlockCert used by MIT among others \cite{jirgensons2018blockchain}). 
    
 For decentralized operations of this simple scaling mechanism,  any node in the side blockchain should be able to commit hashes to the trusted blockchain. This simple operation, however, opens up a serious vulnerability:  an adversarial side blockchain node can  commit the hash of a block without transmitting the block to any other side blockchain node. Thus, while the hash is part of the ordering according to the trusted blockchain, the block corresponding to the hash is itself unavailable at any side blockchain node; this {\em data availability attack} is a serious threat to the liveness of the side blockchain.  
     The straightforward solution to this  data availability attack is to store all blocks on the trusted blockchain,  but the communication and storage overhead for the trusted blockchain is directly proportional to the size of the side blockchains and the mechanism is no longer scalable.  
    
  We propose an intermediate ``data availability oracle" layer that interfaces between the side blockchains and the trusted blockchain (Fig. \ref{fig:three-layer}). The oracle layer  accepts blocks from side blockchains,  pushes  verifiable commitments to the trusted blockchain and  ensures  data availability to the side blockchains. The  $N$ oracle layer nodes  work together to reach a consensus about whether the proposed block is retrievable (i.e., data is available) and only then commit it to the trusted blockchain. The key challenge is how to securely and efficiently share the data amongst the oracle nodes to verify data availability; if all oracle nodes maintain a copy of the entire side blockchain data locally (i.e.,  repetition),  then that obviously leads to simple majority vote-based retrievability but is not scalable. If the data is dispersed among the nodes to reduce redundancy (we call this ``dispersal"), even one malicious oracle node can violate the retrievability. Thus it appears there is an inherent trade-off between  security and efficiency. 
    
The main point of this paper is to demonstrate that the trade-off between security and efficiency is not inherent; the starting point of our solution is the utilization of an erasure code such that different oracle nodes receive different coded chunks. A key issue is how to ensure the integrity and correctness of the coded chunks. Intuitively we can use a Merkle tree to provide the proof of inclusion for any chunk, but a malicious block producer can construct a Merkle tree of a bunch of nonsense symbols so that no one can successfully reconstruct the block. To detect such attacks, nodes can broadcast what they received and meanwhile download chunks forwarded by others to decode the data and check the correctness. Such a method is used in an asynchronous verifiable information dispersal (AVID) protocol proposed by \cite{cachin2005asynchronous}. AVID makes progress on storage savings with the help of erasure code but sacrifices  communication efficiency. Nodes still need to download all data chunks;  hence, the communication complexity is $O(Nb)$. An alternative approach to detect incorrect coding attacks is via an {\em incorrect-coding proof} (also called fraud proof), which contains symbols that fail the parity check and can be provided by any node who tries to reconstruct the data;   consider using an $(n,k)$ Reed-Solomon code (which is referred as 1D-RS), with $k$ coded symbols in the fraud proof, essentially not much better than downloading the original block ($n$ symbols). For reducing the size of fraud proof, 2D-RS \cite{albassam2019fraud} places a block into a $(\sqrt{k}, \sqrt{k})$ matrix and apply $(\sqrt{n},\sqrt{k})$ Reed-Solomon code on all columns and rows to generate $n^2$ coded symbols; 2D-RS reduces the fraud proof size to $O(\sqrt{b}\log b)$ if we assume symbol size is constant.

 In summary (Table~\ref{tab:comparison}), to find a scalable solution for the data availability oracle problem, erasure code based methods must defend against the incorrect coding attack while minimizing the storage and communication cost. 1D-RS has low communication complexity but when the storing node is adversarial, the storage and download overhead are factor $b$ worse than optimal. The storage and download overhead of AVID remains optimal even under adversarial storing node but the communication complexity is factor $N$ worse than optimal. A full analysis on each performance entry in the table is provided in Appendix \ref{sec:performance-table}.

Our main technical contribution is  a new protocol, called Authenticated Coded Dispersal (ACeD), that provides a scalable solution to the data availability oracle problem.  ACeD achieves near-optimal performance on all parameters: optimal storage, download and communication overhead under the normal path, and near-optimal (worse by a logarithmic factor) storage and download overhead when the storing node is adversarial, cf.\ Table~\ref{tab:comparison}. 
We state and  prove the security of the data availability guarantee and efficiency properties of ACeD in a formal security model (Section~\S\ref{sec:system-overview}).

\begin{table}[]
\centering
\caption{Performance metrics for different data availability oracles ($N$: number of oracle nodes, $b$: block size).}
\begin{threeparttable}
\resizebox{\textwidth}{!}{
\begin{tabular}{|l|l|c|c|c|c|c|}
\hline
\multirow{3}{*}{}    & maximal   & \multicolumn{2}{c|}{normal case}                              & \multicolumn{2}{c|}{worst case}                               & \multirow{3}{*}{\begin{tabular}[c]{@{}c@{}}communication \\ complexity\end{tabular}} \\ \cline{3-6}
                     & adversary & storage                       & download                      & storage                       & download                      &                                                                                      \\
                     & fraction  & \multicolumn{1}{l|}{overhead} & \multicolumn{1}{l|}{overhead} & \multicolumn{1}{l|}{overhead} & \multicolumn{1}{l|}{overhead} &                                                                                      \\ \hline
uncoded (repetition) & 1/2       & $O(N)$                        & $O(1)$                        & $O(N)$                        & $O(1)$                        & $O(Nb)$                                                                              \\ \hline
uncoded (dispersal)  & 1/N       & $O(1)$                        & $O(1)$                        & $O(1)$                        & $O(1)$                        & $O(b)$                                                                               \\ \hline
AVID \cite{cachin2005asynchronous}                & 1/3       & $O(1)$                        & $O(1)$                        & $O(1)$                        & $O(1)$                        & $O(Nb)$                                                                              \\ \hline
1D-RS                & 1/2       & $O(1)$                        & $O(1)$                        & $O(b)$                        & $O(b)$                        & $O(b)$                                                                               \\ \hline
2D-RS \cite{albassam2019fraud}                & 1/2       &     $O(1)$                    &     $O(1)$                      &          $O(\sqrt{b}\log b)$      &          $O(\sqrt{b}\log b)$                 &      $O(b)$                                                                        \\ \hline
ACeD                 & 1/2       & $O(1)$                        & $O(1)$                        & $O(\log b)$                   & $O(\log b)$                   & $O(b)$                                                                          \\ \hline
\end{tabular}}
\end{threeparttable}

\label{tab:comparison}
\end{table}
\begin{table}
\centering
\caption{System Performance Metrics}
\label{tab:metrics}
\renewcommand{\tabularxcolumn}{m} 
\resizebox{\textwidth}{!}{
\begin{tabularx}{\textwidth}{|c|c|>{\raggedright}X|}
\hline
\textbf{Metric} & \textbf{Formula}&\textbf{Explanation} \tabularnewline
\hline
Maximal adversary fraction & $\beta$ & The maximum number of adversaries is $\beta N$.
\tabularnewline
\hline
Storage overhead & $D_{\text{store}} / D_{\text{info}}$& The ratio of total storage used and total information stored.
 \tabularnewline
\hline
Download overhead & $D_{\text{download}}/D_{\text{data}}$ & The ratio of the size of downloaded data and the size of reconstructed data.
\tabularnewline
\hline
Communication complexity &$D_{\text{msg}} $ &Total number of bits communicated.
\tabularnewline
\hline
\end{tabularx}}
\end{table}

     \noindent {\bf Technical summary of ACeD}. There are four core components in ACeD, as is shown in Fig.\ref{fig:ACeD} with the following highlights. 
     \begin{itemize}
         \item ACeD develops a novel coded commitment generator called {\em Coded Interleaving Tree}  (CIT), which is constructed layer by layer in an interleaved manner embedded with erasure codes. The interleaving property avoids downloading extra proof and thus minimizes the number of symbols needed to store. 
         \item A  dispersal protocol is designed to disperse tree chunks among the network with the least redundancy and we show how feasible dispersal algorithms ensure the reconstruction of all data. 
         \item A hash-aware peeling decoder is used to achieve linear decoding complexity. The fraud proof is minimized to a {\em single parity equation}.
     \end{itemize}

  \noindent  \textbf{Performance  guarantees of ACeD.}  Our main mathematical claim is that safety of ACeD holds as long as the trusted blockchain is secure, and ACeD is live as long as the trusted blockchain is live and a majority of oracle nodes are honest (i.e., follow protocol) (Section~\S\ref{sec:securityanalysis}). ACeD is the first scalable data availability oracle that promises storage and communication efficiency while providing a guarantee for security with a provable bound and linear retrieval complexity; see Table \ref{tab:comparison} with details deferred to Section~\S\ref{sec:efficiencyanalysis}. The block hash commitment on the trusted blockchain and the size of fraud proof are both in constant size.

    
 \noindent {\bf Incentives}. From a rational action point of view,  oracle nodes are naturally incentivized to mimic others' decisions without storing/operating on  their own data. This ``information cascade" phenomenon is recognized as a basic challenge of  actors  abandoning their own information in favor of inferences based on actions of earlier people when making sequential decisions \cite{easley2010networks}. In the context of ACeD, we carefully use the semantics of the data dispersal mechanisms to design a probabilistic auditing mechanism that ties the vote of data availability to an appropriate action  by any oracle node. This allows us to create a formal rational actor model where the incentive mechanism can be mathematically modeled: we show that the honest strategy is a strong 
 Nash equilibrium;  the details are deferred to Appendix~\S\ref{sec:incentiveanalysis}. 
    
    \noindent {\bf Algorithm to system design and implementation}. We design an efficient system architecture implementing the ACeD components. Multithreaded erasure code encoding and block pipelining designs significantly parallelize the operations leading to a high performing architecture. We implement this design in roughly 6000 lines of code in {\sf Rust} and integrate ACeD with Ethereum (as the trusted blockchain).  We discuss the design highlights and implementation optimizations in Section \S\ref{sec:implementation}. 
   
   \noindent {\bf Evaluation}. ACeD is very efficient theoretically, but also in practice.  Our implementation of ACeD is run by lightweight  oracle nodes (e.g.: up to 6 CPU cores) communicating over  a wide area network (geographically situated in three continents) and is integrated  with the Ethereum testnet {\sf Kovan} with full functionality for the side blockchains to run Ethereum smart contracts. Our implementation scales a side blockchain throughput up to 10,000 tx/s while adding  a latency of only a few seconds. Decoupling computation from Ethereum (the trusted blockchain) significantly reduces gas costs associated with side blockchain transactions: in our experiments on a popular Ethereum app Cryptokitties, we find that the gas (Ethereum transaction fee) is reduced by a factor of over 6000.  This is the focus of Section \S\ref{sec:evaluation}.

  We conclude the paper with an overview of our contributions in the context of the interoperability of blockchains in Section~\S\ref{sec:conclusion}.

\section{Related Work}
\label{sec:related-work}

\noindent {\bf Blockchain Scaling}. Achieving  the highest throughput and lowest latency blockchains has been a major focus area; this research has resulted in new proof of work \cite{bagaria2019prism,yang2019prism,yu2020ohie,fitzi2018parallel}, proof of stake \cite{gilad2017algorand,daian2019snow,kiayias2017ouroboros,david2018ouroboros,badertscher2018ouroboros}, and hybrid \cite{buterin2017casper,pass2017hybrid} consensus protocols. Scaling throughput linearly with the number of nodes in the network (known as {\em horizontal scaling}) is the focus of {\em sharding} designs: partition the blockchain and parallelize the computation and storage responsibilities  \cite{luu2016secure, kokoris2018omniledger,rana2020free2shard}. Both horizontal and vertical scaling approaches directly impact the core technology (`` layer 1") of the blockchain stack and their implementation in existing public blockchains is onerous. An alternate approach is to lock parts of the state of the blockchain and process transactions associated with this locked state in an application layer: this approach includes payment channels\cite{decker2015fast, miller2017sprites} and state channels \cite{poon2017plasma,teutsch2019scalable, kalodner2018arbitrum}.  
In this paper we are concerned with a third (and direct) form of blockchain scaling:  we interact very lightly with the trusted blockchain (only storing hashes of each block of the side blockchains) but design and implement an oracle that allows for scalable secure interactions between the side blockchains and the trusted blockchain. 

\noindent {\bf Data Availability}. 
Blockchain nodes that do not have access to all of the data (e.g.: light nodes in simple payment verification clients in Bitcoin) are susceptible to the data availability attack. One approach makes use of zero knowledge proofS to represent the delta of the blockchain state, eliminating the need for the entire data\cite{zkrollup}. Another approach is to have the light nodes rely on full nodes to notify the misbehavior of a malicious block proposer (which has withheld the publication of the full block). Coding the blockchain data to improve the efficiency of fraud proofs was first suggested in \cite{al2018fraud} (using 2D Reed-Solomon codes), and was strongly generalized by \cite{yu2020coded} via a cryptographic hash accumulator called Coded Merkle Tree (CMT) to generate the commitment of the block. Light nodes randomly sample symbols through anonymous channels and by collecting sampling results, an honest full node can either reconstruct the block or provide incorrect-coding proof. CMT reduces the proof size compared to \cite{al2018fraud}. However, the sampling mechanism is probabilistic (and not appropriate for implementation in the oracle setting of this paper); further CMT requires anonymous communication channels. In this paper, we propose an alternate coded Merkle tree construction (CIT) that is specific to the oracle operation: the method is equipped with a ``push" scheme to efficiently and deterministically disseminate data among oracle nodes (no communication anonymity is needed). 

Another perspective to understand the data availability problem is to find a secure, efficient way of dispersing data in the network and prove it is permanently retrievable. Such a problem has been discussed under the context of asynchronous verifiable information dispersal (AVID)\cite{cachin2005asynchronous}, which applies erasure code to convert a file into a vector of symbols which later are hashed into a fingerprint. The symbols together with the fingerprint are disseminated to nodes. Then each node checks the hash and broadcasts its received copy for guaranteeing  retrievability. AVID improves storage saving but sacrifices communication efficiency; the performance is  summarized in Table~\ref{tab:comparison}, with details  in Appendix \ref{sec:performance-table}. The key insight of ACeD is that by using codes designed to have short fraud proofs, it is possible to avoid the additional echo phase in AVID where honest nodes flood the chunks to each other and thus provide near-optimal communication efficiency.

\section{System and Security Model}
\label{sec:system-overview}
The system is made up of three components: a trusted blockchain (that stores commitments and decides ordering), clients (nodes in side blockchains who propose data), 
and an intermediate oracle layer ensuring data availability (see Fig.\ref{fig:ACeD}).

\subsection{Network Model and Assumptions}
There are two types of nodes in the network:  oracle nodes and clients.



\textit{Oracle nodes} are participants in the oracle layer. They receive block commitment requests from clients, including block headers, and a set of data chunks. 
After verifying the  integrity and correctness of the data, they vote to decide whether the block is available or not and submit the results to the trusted blockchain. 

\textit{Clients} propose blocks and request the oracle layer to store and commit the blocks. They periodically update the local ledger according to the commitment states from the trusted blockchain and ask oracle nodes for the missing blocks on demand. 

One of the key assumptions of our system is that the trusted blockchain has a persistent order of data and liveness for its service. Besides, we assume that in the oracle layer, there is a majority of honest nodes. 
For clients, we only assume that at least one client is honest (for liveness). Oracle nodes are connected to all clients. The network is synchronous, and the communication is authenticated and reliable.


\subsection{Oracle Model}
The oracle layer is introduced to offload the storage and ensure data availability. The network of oracle layer consists of $N$ oracle nodes, which can interact with clients to provide data availability service. There exists an adversary that is able to corrupt up to $\beta N$ oracle nodes. Any node if not corrupted is called honest. 

The basic data unit for the oracle layer is a \textit{block}. A data availability oracle comprises of the following primitives which are required for committing and retrieving a block $B$.
\begin{enumerate}
    \item \textbf{Generate chunks}: When a client wants to commit a block $B$ to the trusted blockchain, it runs $(\textsf{generate\_commitment}(B,M))$ to generate a commitment $c$ for the block $B$ and a set of $M$ chunks $c_1,..c_M$ which the block can be reconstructed from.
    \item \textbf{Disperse chunks}: There is a dispersal protocol $\textsf{disperse}(B,(c_1,..,c_M),N)$ which can be run by the client and specifies which chunks need to be sent to which of the $N$ oracle nodes. 
    \item \textbf{Oracle finalization}: The oracle nodes run a finalization protocol to finalize and accept certain blocks whose commitments are written into the trusted blockchain.
    \item \textbf{Retrieve Data:} Clients can initiate a request $(\textsf{retrieve}, c)$ for retrieving a set of chunks for any commitment $c$ that has been accepted by the oracle.
    \item \textbf{Decode Data}: There is a primitive $\textsf{decode}(c,\{c_i\}_{i \in S})$ that any client can run to decode the block from the set of chunks $\{c_i\}_{i \in S}$ retrieved for the commitment. The decoder also returns a proof that the decoded block $B$ is related to the commitment. 
\end{enumerate}
We characterize the security of the oracle model and formally define data availability oracle as follows,
\begin{definition} A data availability oracle for a trusted blockchain accepts blocks from clients and writes commitments into the trusted blockchain with the following properties:
\begin{enumerate}
    \item \textbf{Termination:} If an honest client initiates a $\textsf{disperse}$ request for a block $B$, then block $B$ will be eventually accepted and the commitment $c$ will be written into the trusted blockchain.
    \item \textbf{Availability} If a dispersal is accepted, whenever an honest client requests for retrieval of a commitment $c$, the oracle is able to deliver either a block $B$ or a null block $\emptyset$ and prove its relationship to the commitment $c$.
    \item \textbf{Correctness:} If two honest clients on running  $(\textsf{retrieve}, c)$ receives $B_1$ and $B_2$, then $B_1=B_2$.  If the client that initiated the dispersal was honest, we require furthermore that $B_1=B$, the original dispersed block.

\end{enumerate}
\label{def:daoracle}
\end{definition}

A naive oracle satisfying all above expectations is trivial to construct, e.g., sending all oracle nodes a full copy of data. However, what we want is a \textbf{scalable} data availability oracle. To better understand this motivation, in the next section, we will introduce some metrics to concretize the oracle properties.

\section{Technical Description of ACeD}
\label{sec:aced}
In this section, we describe the four components of ACeD: CIT, dispersal protocol, retrieval protocol and block peeling decoder. 


\begin{figure}[hbt]
    \centering
    \includegraphics[width=0.9\textwidth]{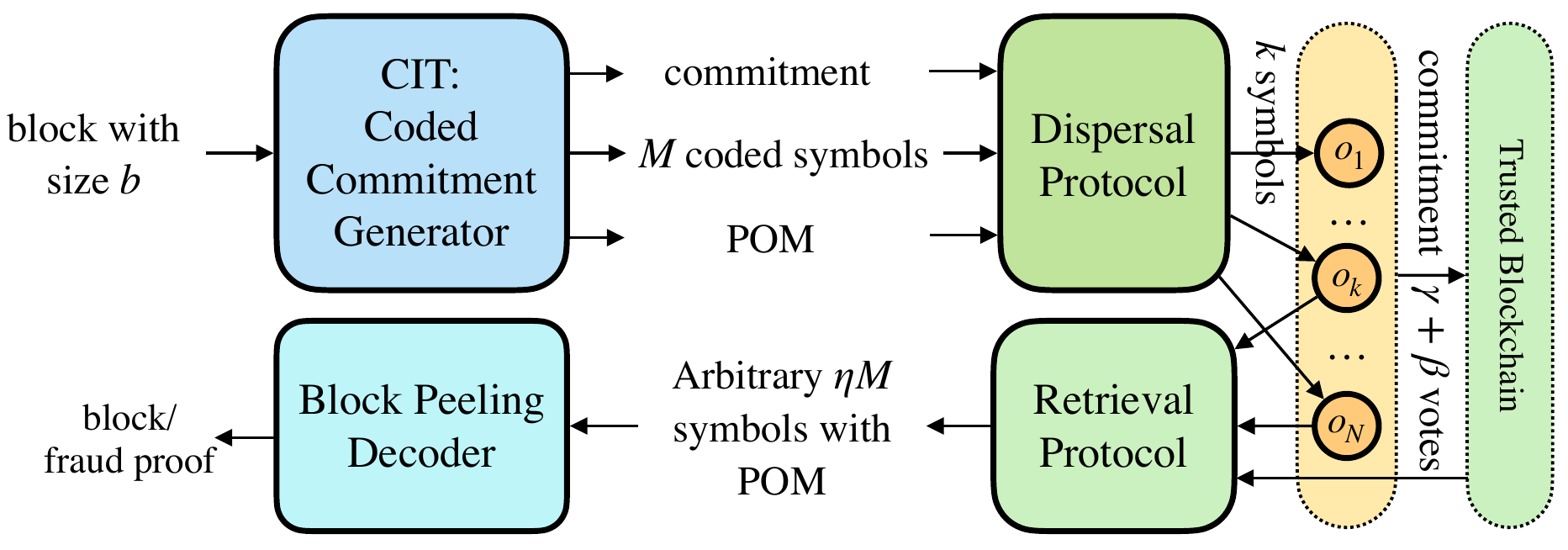}
    \caption{The pipeline for a block to be committed in trusted blockchain.}
    \label{fig:pipeline}
\end{figure}


\subsection{Coded Interleaving Tree}
The first building block of ACeD is a coded commitment generator which takes a block proposed by a client as an input and creates three outputs: a commitment, a sequence of {\em coded symbols}, and their proof of membership \POM -- see Figure \ref{fig:pipeline}. The commitment is the root of a coded interleaving tree (CIT), the coded symbols are the leaves of CIT in the base layer, and \POM for a symbol includes the Merkle proof (all siblings' hashes from each layer) and a set of {\em parity symbols} from all intermediate layers. A brief overview to erasure coding can be found in Appendix~\ref{sec:erasure-code}. 

The construction process of an example CIT is illustrated in figure \ref{fig:tree}. Suppose a block has size $b$, and its CIT has $\ell$ layers. The first step to construct the CIT is to divide the block evenly into small chunks, each is called a {\em systematic symbol}. The size of a systematic symbol is denoted as $c$, so there are $s_{\ell} = b/c$ systematic symbols. And we apply erasure codes with coding ratio $r \le 1$ to generate $m_{\ell} =s_\ell/r$  coded symbols in base layer. Then by aggregating the hashes of every $q$ coded symbols we get $m_{\ell}/q$ systematic symbols for its parent layer (layer $\ell-1$), which can be further encoded to $m_{\ell-1} = m_\ell/(qr)$ coded symbols. We aggregate and code the symbols iteratively until the number of symbols in a layer decays to $t$, which is the size of the root.

\begin{figure}[hbt]
    \centering
    \includegraphics[width=\linewidth]{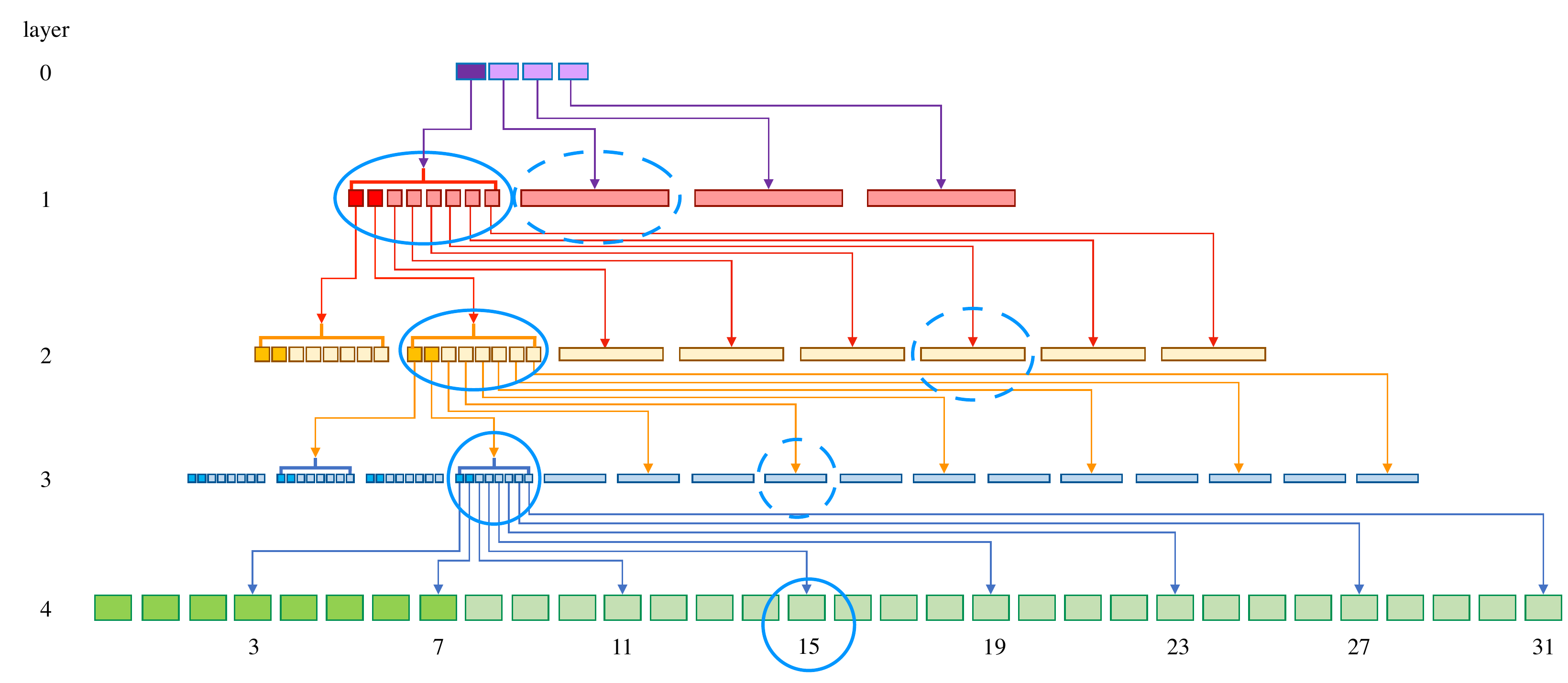}
    \caption{(1) CIT construction process of a block with $s_\ell = 8$ systematic symbols, applied with erasure codes of coding ratio $r = \frac{1}{4}$. The batch size $q = 8$ and the number of hashes in root is $t=4$. (2) Circled symbols constitute the 15th base layer coded symbol and its \POM. The solidly circled symbols are the base layer coded symbol and its Merkle proof (intermediate systematic symbols), the symbols circled in dash are parity symbols sampled deterministicly.}
    \label{fig:tree}
\end{figure}

For all layers $j$ except for root, $1\le j \le \ell$, denote the set of all $m_j$ coded symbols as $M_j$, which contains two disjoint subsets of symbols: systematic symbols $S_j$ and parity symbols $P_j$. The number of systematic symbols is $s_j = r m_j$. Specifically, we set $S_j = [0, rm_j)$ and $P_j=[rm_j, m_j)$. Given a block of $s_{\ell}$ systematic symbols in the base layer, the aggregation rule for the $k$-th systematic symbol in layer $j-1$ is defined as follows:
\begin{equation}
  Q_{j-1}[k] = \{ \h(M_{j}[x])\ |\ x \in [0,M_{j}), k = x \bmod rm_{j-1} \} 
\end{equation}
\begin{equation}
M_{j-1}[k] = \h(\mathsf{concat}(Q_{j-1}[k]))
\end{equation}
where $1\le j \le \ell$ and $\h$ is a hash function. $Q[k]$ is the tuple of hashes that will be used to generate $k$-th symbol in the parent layer and \textsf{concat} represents the string concatenation function which will concatenate all elements in an input tuple.

Generating a \POM for a base layer symbol can be considered as a layer by layer sampling process as captured by the following functions:
\begin{equation*}
	f_{\ell}: [m_{\ell}] \to {m_{\ell-1} \choose 2}, \cdots,  f_{2}: [m_{\ell}] \to {m_{1} \choose 2};
\end{equation*}
Each function maps a base layer symbol to two symbols of the specified layer: one is a systematic symbol and the other is a parity symbol. 
We denote the two symbols with a tuple of functions $f_j(i) = (p_j(i), e_j(i))$, where $p_j(i)$ is the sampled systematic symbol and $e_j(i)$ is the sampled parity symbol, each is defined as follows:

\begin{equation}
\label{eq:pe}
    p_{j}(i) = i \bmod rm_{j-1}  ; \;\; e_{j}(i) = rm_{j-1}  + (i \bmod (1-r) m_{j-1})
\end{equation}
where $p_{j}: [m_{\ell}] \to [0, rm_{j-1})$ and $e_{j}: [m_{\ell}] \to [rm_{j-1}, m_{j-1})$. Note that the sampled non-base layer systematic symbols are automatically the Merkle proof for both systematic and parity symbols in the lower layer.

There are two important properties of the sampling function. (1) It guarantees that if at least $\eta\le 1$ ratio of distinct base layer symbols along with their \POM are sampled, then in every intermediate layer, at least $\eta$ ratio of distinct symbols are already picked out by the collected \POM. It ensures the reconstruction of each layer of CIT(Lemma~\ref{lm:recon}, {\em reconstruction} property). (2) All sampled symbols at each layer have a common parent (Lemma~\ref{lm:sibl}, {\em sibling} property), which ensures the space efficiency of sampling.

As described above,  CIT can be represented by parameters \pt.  All parameters have been defined except  $\alpha$, which  is the {\em undecodable ratio} of an erasure code and $d$, which is the number of symbols in a failed {\em parity equation}. Both of them will be discussed in the decoder section (Section~\ref{sec:decoder}). 

\myparagraph{Comparison with CMT}  (1) CIT is designed for the Push model (the client sends different chunks to different nodes) whereas CMT is designed for the Pull model (the nodes decide which chunks to sample from the client). (2)The CIT provides an ideal property (Lemma 2) that ensures reconstruction from any set of nodes whose size is greater than a given threshold. Thus as long as enough signatures are collected and we can be assured that the honest subset represented therein is larger than this threshold, we are guaranteed reconstruction in CIT. This is not the case in CMT, whose guarantees are probabilistic since each node samples data in a probabilistic manner. (3) On the technical side, CIT requires a new interleaving layer as well as a dispersal mechanism which decides how to assign the different symbols from intermediate layers to different groups. (4) Furthermore, CMT requires an assumption of an anonymous network whereas CIT does not require an anonymous network. 

\subsection{Dispersal Protocol}
\label{section:dispersal}
Given the commitment, coded symbols and \POM generated by CIT, the dispersal protocol is used to decide the chunks all oracle nodes need to store. Consider a simple model where there are $M = b/(cr)$ chunks to be distributed to $N$ nodes (each node may receive $k$ chunks) such that any $\gamma$ fraction of nodes together contains $\eta$ fraction of chunks. The efficiency of the dispersal protocol is given by $\lambda = M/(Nk)$. We define the dispersal algorithm formally as follows,

\begin{definition}
The chunk dispersal algorithm is said to achieve parameters set $(M,N,\gamma,\eta,\lambda)$ if there is a collection of sets $\mathcal{C} = \{A_1,...,A_N\}$ such that  $A_i \subseteq [M]$, $|A_i| = \frac{M}{N \lambda}$. Also, for any $S \subseteq [N]$ with $|S| = \gamma N$, it holds that $|\bigcup\limits_{i \in S}{} A_i| \geq \eta M$.
\end{definition}

To simplify the $5$ dimensional region, we consider the tradeoff on the three quantities: a certain triple $(\gamma,\eta,\lambda)$ is said to be achievable if for any $N$, there exists $M$ such that the chunk dispersal algorithm can achieve parameters set $(M,N,\gamma,\eta,\lambda)$. In the ACeD model, $\gamma$ is constrained by security threshold $\beta$ since the clients can only expect to collect chunks from at most $(1-\beta) N$ nodes. $\eta$ is constrained by the undecodable ratio $\alpha$ of the erasure code since the total chunks a client collects should enable the reconstruction. So for a given erasure code, there is a trade-off between dispersal efficiency and security threshold. 

Our main result is the demonstration of a dispersal protocol with near optimal parameters (Lemma~\ref{thm:dispersal}). 

\begin{lemma}
\label{thm:dispersal}
    If $\frac{\gamma}{\lambda} < \eta$, then $(\gamma,\eta,\lambda)$ is not feasible. 
    If $\frac{\gamma}{\lambda} > \log (\frac{1}{1-\eta})$ then $(\gamma,\eta,\lambda)$ is feasible and there exists a chunk dispersal protocol with these parameters. 
\end{lemma}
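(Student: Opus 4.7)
The plan is to separate the two halves of the lemma cleanly: the lower bound $\gamma/\lambda < \eta$ follows from a counting argument, while the upper bound $\gamma/\lambda > \log(1/(1-\eta))$ will be proved by a probabilistic construction combined with concentration and a union bound over the choices of $S$.

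For the converse, I would observe that for any $S \subseteq [N]$ with $|S| = \gamma N$, a trivial union bound together with the size constraint $|A_i| = M/(N\lambda)$ gives
$$\Bigl|\bigcup_{i \in S} A_i\Bigr| \;\le\; \sum_{i \in S} |A_i| \;=\; \gamma N \cdot \frac{M}{N\lambda} \;=\; \frac{\gamma M}{\lambda}.$$
If this quantity is required to be at least $\eta M$ for even a single $S$, we must have $\gamma/\lambda \ge \eta$. Hence $\gamma/\lambda < \eta$ rules out feasibility for every choice of $M, N$.

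For achievability, I would sample each $A_i$ independently and uniformly at random among the size-$k$ subsets of $[M]$, where $k = M/(N\lambda)$. Fix any $S$ with $|S| = \gamma N$; for any element $m \in [M]$, independence of the $A_i$'s across $i$ yields
$$\Pr\Bigl[m \notin \bigcup_{i \in S} A_i\Bigr] \;=\; \left(1 - \frac{1}{N\lambda}\right)^{\gamma N} \;\le\; e^{-\gamma/\lambda}.$$
Writing $\gamma/\lambda = \log(1/(1-\eta)) + \varepsilon$ for the slack $\varepsilon > 0$ supplied by the hypothesis, the expected number of uncovered elements is at most $(1-\eta)\,e^{-\varepsilon}\,M$, which is strictly below $(1-\eta)M$ by a multiplicative factor.

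The final step is concentration plus a union bound over $S$. Since each $A_i$ is a uniform size-$k$ subset, the indicators $\mathbf{1}[m \in A_i]$ across $m$ are negatively associated, and this property is preserved under complementation and under taking products across independent $i$, so the indicators $X_m = \prod_{i \in S} \mathbf{1}[m \notin A_i]$ are negatively associated across $m$. A Chernoff bound for negatively associated random variables then gives
$$\Pr\Bigl[\,\Bigl|\bigcup_{i \in S} A_i\Bigr| < \eta M\,\Bigr] \;\le\; \exp\bigl(-c(\varepsilon,\eta)\,M\bigr)$$
for an explicit constant $c(\varepsilon,\eta) > 0$. Union bounding over the at most $\binom{N}{\gamma N} \le 2^N$ choices of $S$ and taking $M$ to be a sufficiently large multiple of $N$ drives the total failure probability strictly below $1$, so a realization in the good event exists and is the desired deterministic $\mathcal{C}$. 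The main obstacle I expect is exactly this bookkeeping: the Chernoff exponent is fixed by the slack $\varepsilon$ from the hypothesis, while the number of subsets grows as $2^{H(\gamma)N}$, so the scaling $M = \Omega(N)$ with constant depending on $(\gamma,\eta,\varepsilon)$ has to be threaded carefully. Handling the negative-association step cleanly (or, alternatively, reducing to the Bernoulli sampling model and then correcting for the exact size constraint) is the other nontrivial piece.
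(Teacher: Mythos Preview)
Your proof is correct and follows essentially the same probabilistic-method strategy as the paper: a random assignment of chunks to the $A_i$, a concentration bound on coverage for a fixed $S$, and a union bound over the $\binom{N}{\gamma N}\le e^{NH_e(\gamma)}$ choices of $S$, with $M$ taken large relative to $N$. The only minor difference is that the paper samples chunks i.i.d.\ with replacement inside each $A_i$ and then invokes an occupancy/balls-in-bins tail bound, whereas you sample uniform size-$k$ subsets and use negative association plus Chernoff; both routes give the same exponential-in-$M$ concentration and hence the same conclusion.
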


We provide a sketch of proof. If $\frac{\gamma}{\lambda} < \eta$, the maximum number of distinct symbols is $\frac{M}{N\lambda}\cdot \gamma M = \frac{M\gamma}{\lambda} < \eta M$, so  $(\gamma,\eta,\lambda)$ is not feasible. 

If $\frac{\gamma}{\lambda} > \log (\frac{1}{1-\eta})$, we prove that the probability of $\mathcal{C}$ is not a valid code can vanish to as small as we want. The problem is transformed to the upper bound of $P(Y < \eta M)$ use a standard inequality from the method of types \cite{csiszar1998method}, where $Y$ is the number of distinct chunks sampled. By sampling $\frac{\gamma}{\lambda} M$ chunks from a set of $M$ chunks randomly, we have $P(Y < \eta M) \leq e^{-f(\cdot) \cdot M } $, where $f(\cdot)$ is a positive function. So the probability of $\mathcal{C}$ is a valid code can be positive, which proves the existence of a {\em deterministic} chunk distribution algorithm. And we can control the probability of $\mathcal{C}$ is not a valid code to be vanishingly small. The detailed proof of the result can be found in Appendix~\S\ref{sec:proofdispersal}. 




In the dispersal phase, all oracle nodes wait for a data commitment $c$, $k$ assigned chunks and the corresponding \POM. The dispersal is accepted if $\gamma + \beta$ fraction of nodes vote that they receive the valid data.



\subsection{Retrieval Protocol and Block Decoding}
\label{sec:decoder}
When a client wants to retrieve the stored information, the retrieval protocol will ask the oracle layer for data chunks. Actually, given erasure codes with undecodable ratio $\alpha$, an arbitrary subset of codes with the size of over ratio $1-\alpha$ is sufficient to reconstruct the whole block. When enough responses are collected, a hash-aware peeling decoder introduced in \cite{yu2020coded} will be used to reconstruct the block. The decoding starts from the root of CIT to the leaf layer and for each layer, it keeps checking all degree-0 parity equations and then finding a degree-1 parity equation to solve in turn. Eventually, either all symbols are decoded or there exists an invalid parity equation. 
In the second case, a logarithmic size incorrect-coding proof is prepared, which contains the inconsistent hash, $d$ coded symbols in the parity equation and their Merkle proofs. After an agreement is reached on the oracle layer, the logarithmic size proof is stored in the trusted blockchain to permanently record the invalid block. Oracle nodes then remove all invalid symbols to provide space for new symbols. In the special case when the erasure code used by the CIT requires more than $(1-\alpha)$ ratio of symbols to decode the original block, oracle nodes need to invoke a bad code handling protocol to reset a better code. We leave details in the Appendix~\S\ref{sec:bad-code}.


\subsection{Protocol Summary}
In summary, an ACeD system with $N$ oracle nodes and block size $b$ using CIT $\mathcal{T}$ and dispersal protocol $\mathcal{D}$ can be represented by parameters \paced, where \pt and \pd. The pipeline to commit a block to the trusted blockchain is as follows (see Figure~\ref{fig:pipeline}).

\begin{itemize}
    \item A client proposes a block of size $b$, it first generates a CIT with base layer symbol size $c$, number of hashes in root $t$, coding ratio $r$ and batch size $q$. There are $M = b/(cr)$ coded symbols in the base layer. And then it disperses $M$ coded symbols, their \POM and the root of CIT to $N$ oracle nodes using the dispersal protocol \pd. 
    \item Oracle nodes receive the dispersal request, they accept chunks and commitment, verify the consistency of data, \POM and root, vote their results. A block is successfully committed if there are at least $\beta + \gamma$ votes. Upon receiving retrieval requests, oracle nodes send the stored data to the requester. Upon receiving the fraud proof of a block, oracle nodes delete the stored data for that block.
    \item Other clients 
    send retrieval requests to the oracle nodes on demand. Upon receiving at least $\eta \ge 1-\alpha$ fraction of chunks from at least $\gamma$ oracle nodes, they reconstruct the block, if a coding error happens, the fraud proof will be sent to the trusted blockchain. 
\end{itemize}

\section{Performance 
Guarantees of ACeD}
\label{sec:analysis}

\begin{theorem}
\label{theorem-main}
Given an adversarial fraction $\beta < \frac{1}{2}$ for an oracle layer of $N$ nodes, ACeD is a data availability oracle for a trusted blockchain with $O(b)$ communication complexity, $O(1)$ storage and download overhead in the normal case, and $O(\log b)$ storage and download overhead in the worst case.
\end{theorem}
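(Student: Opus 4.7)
The plan is to prove the three oracle properties (Termination, Availability, Correctness) from Definition~\ref{def:daoracle} and then account for the communication, storage and download costs separately in the normal and worst cases. First I would fix the parameter regime: choose $\gamma, \eta, \lambda, \alpha$ so that (i) $\gamma + 2\beta \le 1$, which makes the voting threshold $\beta + \gamma$ reachable by the honest nodes alone; (ii) $\eta \ge 1 - \alpha$, which makes any $\eta$-fraction of base-layer symbols sufficient for the hash-aware peeling decoder to reconstruct the block (or to emit an incorrect-coding proof); and (iii) $\gamma/\lambda > \log(1/(1-\eta))$, so that Lemma~\ref{thm:dispersal} guarantees the existence of a feasible chunk dispersal in which every $\gamma$-subset of oracle nodes jointly holds at least $\eta M$ distinct chunks. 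Since $\beta < 1/2$, these constraints are simultaneously satisfiable.

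Next I would establish the three security properties. \textbf{Termination} follows because an honest client constructs a valid CIT, so every honest oracle node verifies its chunks against the committed root and votes yes; at least $1-\beta \ge \beta + \gamma$ yes votes are therefore collected, and the commitment is written to the trusted blockchain by its liveness assumption. \textbf{Availability} follows because an accepted dispersal has at least $\beta + \gamma$ voters, of which at least $\gamma$ are honest; by the chosen dispersal protocol these $\gamma$ honest nodes collectively store $\eta \ge 1-\alpha$ fraction of base-layer chunks together with the associated \POM, which by the reconstruction property (Lemma~\ref{lm:recon}) is enough for the peeling decoder to either return a valid block $B$ or a short fraud proof triggering the $\emptyset$ outcome. \textbf{Correctness} follows from the collision resistance of $\h$: the unique commitment $c$ on the trusted blockchain binds all symbols layer by layer through the CIT, so two honest clients decoding with respect to the same $c$ must obtain the same $B$ or both detect an invalid parity equation and output $\emptyset$; and when the dispersing client was honest the commitment was produced from $B$, so the decoder's output equals $B$.

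For the complexity half I would tally the bits transported and stored. Dispersal sends each of the $M = b/(cr)$ base-layer symbols (of constant size $c$) along with its $O(\log b)$-length \POM exactly once to a single node, and oracle nodes echo only a constant-size vote, so total communication is $O(b)$. In the normal case every oracle node stores $M/(N\lambda) = O(b/N)$ coded symbols plus the interleaved parity symbols from intermediate layers; by the sibling property (Lemma~\ref{lm:sibl}) the intermediate symbols amortise to a constant factor per base symbol, giving an $O(1)$ storage overhead, and a retrieving client downloads just the $\eta$-fraction of chunks it needs, again $O(1)$ overhead. In the worst case an adversarial proposer may force honest nodes to keep full \POM for their assigned chunks until a fraud proof is confirmed; since each \POM has size $O(\log b)$ this inflates both storage and download by exactly the tree depth, giving the claimed $O(\log b)$ overhead, after which the bad-code handler of Section~\ref{sec:decoder} reclaims the space.

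The subtle step, and the one I would spend the most care on, is Availability in the adversarial dispersal regime: one must argue that the $\gamma$ honest voters not only hold enough raw base-layer symbols but also supply a complete set of intermediate \POM nodes, so that the peeling decoder can operate layer by layer. This is where the interleaving structure of CIT is essential, and I would invoke Lemma~\ref{lm:recon} on each of the $\ell = O(\log b)$ layers, combining it with Lemma~\ref{thm:dispersal} applied to the base layer to propagate the $\eta$-coverage upward. Handling the case in which the erasure code itself is bad (so no $\eta$-fraction suffices) is delegated to the bad-code protocol in Appendix~\S\ref{sec:bad-code}, which I would cite rather than reprove.
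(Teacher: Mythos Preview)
Your security half is essentially the paper's argument (which proves Theorem~\ref{theorem-main} by instantiating the more detailed Theorem~\ref{theorem-detail}): Termination from the $1-2\beta\ge\gamma$ vote budget, Availability from Lemma~\ref{thm:dispersal} at the base layer pushed upward by Lemma~\ref{lm:recon}, and Correctness from the binding of the CIT root. That part is fine.

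The complexity half, however, has a genuine gap in the parameter regime. You take the base-layer symbol size $c$ to be a constant, but then the \POM attached to each base symbol has size $\Theta(\log b)$ while the symbol itself has size $\Theta(1)$. Consequently the total dispersal traffic is
\[
\frac{M}{\lambda}\bigl(c+\Theta(\log b)\bigr)\;=\;\Theta\!\left(\frac{b}{c}\cdot\log b\right)\;=\;\Theta(b\log b),
\]
not $O(b)$, and the normal-case storage overhead likewise becomes $\Theta(\log b)$, not $O(1)$; the ``amortise to a constant factor per base symbol'' step fails precisely because the \POM dominates the symbol. The paper's fix is to set $c=\Theta(\log b)$ (with $r,q,d,t=O(1)$), so that symbol and \POM are of the same order; plugging this into the explicit formulas of Theorem~\ref{theorem-detail} is what yields the $O(b)$ and $O(1)$ bounds. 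Also note that each chunk is sent $1/\lambda$ times in expectation, not ``exactly once to a single node.''

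Your worst-case accounting is also off conceptually. The $O(\log b)$ overhead in the worst case does \emph{not} come from honest nodes ``keeping full \POM\ until a fraud proof is confirmed''; honest nodes already store the \POM\ in the normal case. The worst case in the paper is the incorrect-coding regime: what must be stored (and downloaded by any later client) is the fraud proof itself, namely $d{-}1$ symbols of size $c$ together with their Merkle siblings, measured against a reconstructed payload that is just a single hash. With $d=O(1)$ and $c=\Theta(\log b)$ this ratio is $\Theta(\log b)$, which is where the worst-case bound actually comes from.
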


This result follows as a special case of a more general result below (Theorem~\ref{theorem-detail}).  
\begin{proof}
Suppose $\chi$ is an ACeD data availability oracle with parameters \paced \ where \pt and \pd. There are at most  $\beta < \frac{1}{2}$ fraction of adversarial nodes in the oracle layer. Then by setting $r,q,d,t=O(1),c=O(\log b), b \gg N$,  $\chi$ is secure as long as $\beta \le \frac{1}{2}(1-\lambda\log(\frac{1}{\alpha}))$ 
; the communication complexity of $\chi$ is $O(b)$ because
\[Nyt + \frac{b}{\lambda r} + \frac{(2q-1)by}{cr \lambda}\log_{qr}\frac{b}{ctr} = O(N)  + O(b) + O(b) = O(b)\]
the storage and download overhead in the normal case is $O(1)$, because
\[\frac{Nyt}{b} +\frac{1}{\lambda r} + \frac{(2q-1)y}{cr\lambda}\log_{qr}\frac{b}{ctr} = O(1) + O(1) + O(\frac{1}{\log b}\log(\frac{b}{\log b})) = O(1)\]
the storage and download overhead in the worst case is $O(\log b)$, because
\[\frac{c(d-1)}{y} + d(q-1)\log_{qr}\frac{b}{ctr} = O(\log b) + O(\log_{qr}(\frac{b}{\log{b}}) = O(\log b) ). \]
\end{proof}

A complete description of the security and performance guarantees of ACeD is below.  
\begin{theorem}
\label{theorem-detail}
ACeD is a data availability oracle for the trusted blockchain tolerating at most $\beta \le 1/2$  fraction of adversarial oracle nodes in an oracle layer of $N$ nodes. The ACeD is characterized by the system parameters \paced, where \pt and \pd. $y$ is a constant size of a hash digest, then 
\begin{enumerate}
\item ACeD is secure under the conditions that 
    \[\beta \le \frac{1-\gamma}{2};\ \frac{\gamma}{\lambda} > \log (\frac{1}{1-\eta});\  \eta \ge 1-\alpha\]
    \item Communication complexity is 
    \[Nyt + \frac{b}{\lambda r} + \frac{(2q-1)by}{cr \lambda}\log_{qr}\frac{b}{ctr}\]
    \item In normal case, both the storage and download overhead are 
        \[\frac{Nyt}{b} +\frac{1}{\lambda r} + \frac{(2q-1)y}{cr\lambda}\log_{qr}\frac{b}{ctr}\]
    \item In worst case, both storage and download overhead are
        \[\frac{c(d-1)}{y} + d(q-1)\log_{qr}\frac{b}{ctr}\]
\end{enumerate}
\end{theorem}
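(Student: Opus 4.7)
The statement packages a security claim together with three cost bounds, so I would treat them as four sub-goals and discharge each independently, reusing the tree parameters \pt and dispersal parameters \pd. Security reduces to verifying the three interface conditions of Definition~\ref{def:daoracle}; the cost bounds reduce to an accounting over the four pieces of data traversing the ACeD pipeline (Fig.~\ref{fig:pipeline}): the root commitment, the base-layer coded symbols, the per-symbol \POM, and (in the worst case) the fraud proof.

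\textbf{Security.} I would first handle the voting threshold: finalization requires $\gamma+\beta$ affirmative votes, and with at most $\beta N$ adversaries the honest set of $(1-\beta)N$ nodes must itself be able to supply this quorum, yielding $\beta \le (1-\gamma)/2$. Next, for Availability, any client who contacts $\gamma$-fraction of honest oracle nodes must recover a decodable set of base-layer chunks: by Lemma~\ref{thm:dispersal} the $(\gamma,\eta,\lambda)$ dispersal is feasible iff $\gamma/\lambda > \log(1/(1-\eta))$, and the erasure code's undecodable ratio forces $\eta \ge 1-\alpha$. Correctness and Termination then follow because the commitment root anchors every stage through the collision resistance of \h, and the trusted blockchain linearly orders the roots.

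\textbf{Communication and overheads.} For the communication complexity I would enumerate the three traffic terms matching the three summands: the root of $t$ hashes of size $y$ broadcast to $N$ nodes gives $Nyt$; the $M=b/(cr)$ base-layer coded symbols of size $c$, inflated by $1/\lambda$ due to dispersal redundancy, give $b/(\lambda r)$; and each dispersed symbol carries its \POM, which by the sibling property (Lemma~\ref{lm:sibl}) contributes only $(2q-1)$ fresh hashes per intermediate layer, summed across $\log_{qr}(b/(ctr))$ layers, scaled by $M/\lambda$. The normal-case per-block storage/download overhead is each of these three terms divided by $b$. For the worst case, a fraud proof replaces the stored chunks: $d-1$ symbols of size $c$ from the failing parity equation (the $d$-th element being the inconsistent hash itself) plus a Merkle proof of depth $\log_{qr}(b/(ctr))$ with $q-1$ sibling hashes per layer for each of $d$ symbols; normalizing by the hash size $y$ yields the stated expression.

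\textbf{Main obstacle.} The delicate step is the \POM accounting. A naive per-symbol Merkle path would charge $O((\log b)^2)$ proof bits overall, which would wipe out scalability; the crux is to exploit the interleaved construction so that the $(2q-1)$-per-layer bound from Lemma~\ref{lm:sibl} collapses shared hashes across all symbols assigned to a node, while simultaneously keeping the dispersal factor $1/\lambda$, the coding expansion $1/r$, and the tree height $\log_{qr}(b/(ctr))$ cleanly separated so that the three communication terms match exactly. Once this accounting is in hand, everything else reduces to substitution into the CIT and dispersal formulas, and the simplified bounds in Theorem~\ref{theorem-main} are obtained by setting $r,q,d,t=O(1)$ and $c=O(\log b)$.
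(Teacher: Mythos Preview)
Your plan is correct and matches the paper's own proof: security via the three properties of Definition~\ref{def:daoracle} (quorum arithmetic for Termination, Lemma~\ref{thm:dispersal} plus $\eta\ge 1-\alpha$ for Availability, root binding and a case analysis for Correctness), and efficiency via a per-node storage formula $X=ty+kc+k(2q-1)y\log_{qr}(b/ctr)$ with $k=b/(Nrc\lambda)$, then communication $=NX$, normal overhead $=NX/b$, and worst-case overhead $=P/y$ for the fraud-proof size $P$.

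Two small points to tighten. First, your Availability sketch stops at base-layer decodability, but the paper also needs Lemma~\ref{lm:recon} to propagate the $\eta$-fraction recovery from the base layer to \emph{every} intermediate layer; without it you cannot rebuild the tree up to the root and hence cannot verify the reconstructed block against the commitment $c$. Second, your reading of Lemma~\ref{lm:sibl} is slightly off: it does not ``collapse shared hashes across all symbols assigned to a node'' (the paper's formula honestly multiplies the per-symbol \POM cost by $k$ with no cross-symbol sharing); rather, it says that \emph{within a single base symbol's \POM} the systematic sample $p_j(i)$ and the parity sample $e_j(i)$ at each layer are siblings, so the systematic path already authenticates the parity symbol and the per-layer cost is $(q-1)+q=2q-1$ hashes rather than two independent Merkle paths. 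Your actual arithmetic (scaling by $M/\lambda=Nk$) is correct despite this; only the verbal justification needs adjusting.
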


\begin{proof}
We prove the security and efficiency guarantees separately.

\subsection{Security}
\label{sec:securityanalysis}
To prove that ACeD is secure as long as the trusted blockchain is persistent and \[1 - 2\beta \ge \gamma;\ \frac{\gamma}{\lambda} > \log (\frac{1}{1-\eta});\  \eta \ge 1-\alpha\] , we prove the following properties as per Definition~\ref{def:daoracle}.
\begin{itemize}
    \item \textbf{Termination.} In ACeD, a dispersal is accepted only if there is a valid commitment submitted to the trusted blockchain. Suppose an honest client requests for dispersal but the commitment is not written into the trusted blockchain, then either the commitment is not submitted or the trusted blockchain is not accepting new transactions. Since $1-2\beta \ge \gamma$, thus $\beta + \gamma \le 1-\beta$, even if all corrupted nodes remain silent, there are still enough oracle nodes to vote that the data is available and the commitment will be submitted, hence the trusted blockchain is not live, which contradicts our assumption.
    
    \item \textbf{Availability.} If a dispersal is accepted, the commitment is on the trusted blockchain and $\beta+\gamma$ oracle nodes have voted for the block. Since the trusted blockchain is persistent, whenever a client wants to retrieve the block, it can get the commitment and at least $\gamma$ nodes will respond with the stored chunks. On receiving chunks from $\gamma$ fraction of nodes, for a CIT applying an erasure code with undecodable ratio $\alpha$ and a feasible dispersal algorithm $(\gamma, \eta, \lambda)$(Lemma \ref{thm:dispersal}), because $\eta \ge 1-\alpha$, the base layer is reconstructable. Then we prove the following lemma ensures the reconstruction of all intermediate layers.
    \begin{lemma} 
\label{lm:recon}
(Reconstruction) For any subset of base layer symbols $W_\ell$, denote $W_j := \bigcup_{i\in W_\ell}f_j(i)$ as the set of symbols contained in \POM of all symbols in $W_\ell$. If $|W_\ell|\ge \eta m_\ell$, then $\forall j\in[1,\ell]$, $|W_j|\ge \eta m_j$.
\end{lemma}
     The proof of Lemma~\ref{lm:recon} utilizes the property when generating \POM given base layer symbols. (See details in Appendix \S\ref{proof-lm-recon}). Thus the entire tree is eventually reconstructed and the oracle can deliver a block $B$, and the proof for $B$'s relationship to commitment $c$ is the Merkle proof in CIT. If a client detects a failed parity equation and outputs a null block $\emptyset$, it will generate an incorrect-coding proof. 
    
    \item \textbf{Correctness.}
    Suppose for a given commitment $c$, two honest clients reconstruct two different blocks $B_1$ and $B_2$, the original dispersed block is $B$.
    \begin{enumerate}
        \item[(1)] If the client that initiated the dispersal was honest, according to the availability property, $B_1,B_2 \ne \emptyset$, both clients can reconstruct the entire CIT. If $B_1 \ne B_2$, the commitment $c_1 \ne c_2$, which contradicts our assumption that the trusted blockchain is persistent.
        \item[(2)]If the client that initiated the dispersal was adversary and one of the reconstructed blocks is empty, w.l.o.g suppose $B_1 = \emptyset$, the client can generate a fraud proof for the block. If $B_2 \ne \emptyset$, the entire CIT is reconstructed whose root is commitment $c_2$. Since there is no failed equation in the CIT of $B_2$, $c_1 \ne c_2$, which contradict our assumption that the trusted blockchain is persistent. 
        \item[(3)]If the client that initiated the dispersal was adversary and $B_1,B_2 \ne \emptyset$, both clients can reconstruct the entire CIT. If $B_1 \ne B_2$, the commitment $c_1 \ne c_2$, which contradict our assumption that the trusted blockchain is persistent.
    \end{enumerate}
    Thus we have $B_1 = B_2$, and if the client that initiated the dispersal is honest, $B_1 = B$.
    
\end{itemize}




\subsection{Efficiency}
\label{sec:efficiencyanalysis}
Prior to computing the storage and communication cost for a single node to participate dispersal, we first claim a crucial lemma: 
\begin{lemma}
\label{lm:sibl}
For any functions  $p_j(i)$ and $e_j(i)$ defined in equation \ref{eq:pe}, where $1\le j\le \ell$, $0\le i< m_{\ell}$, $p_j(i)$ and $e_j(i)$ are siblings.
\end{lemma}
Lemma \ref{lm:sibl} indicates that in each layer, there are exactly two symbols included in the \POM for a base layer symbol and no extra proof is needed since they are siblings (see proof details in \S\ref{proof-lm-sibl}). For any block $B$, oracle nodes need to store two parts of data, the hash commitment, which consists of $t$ hashes in the CIT root, and $k$ dispersal units where each unit contains one base layer symbol and two symbols per intermediate layer. Denote the total storage cost as $X$, we have
\begin{align}
      X =  ty + kc + k[y(q-1) + yq]\log_{qr}\frac{b}{ctr}
\end{align}
where $y$ is the size of hash, $b$ is the size of block, $q$ is batch size, $r$ is coding rate, and $c$ is the size of a base layer symbol. Notice that $k = \frac{b}{Nrc \lambda}$, we have
\begin{align}
     X = ty + \frac{b}{Nr\lambda } + \frac{(2q-1)by}{Nrc \lambda}\log_{qr}\frac{b}{ctr}.
\end{align} 
It follows that the communication complexity is $NX$. In the normal case, each node only stores $X$ bits hence the storage overhead becomes $\frac{NX}{b}$, and similarly when a client downloads data from $N$ nodes, its overhead is $\frac{NX}{b}$.  In the worst case, we use incorrect-coding proof to notify all oracle nodes. The proof for a failed parity equation which contains $d$ coded symbols consist of $d-1$   symbols and their Merkle proofs, denote the size as $P$, we have
\begin{equation}
  P =  (d-1)c + dy(q-1)\log_{qr}\frac{b}{ctr}.
\end{equation}
\end{proof}
The storage and download overhead in this case is $\frac{P}{y}$, the ratio of the proof size and the size of reconstructed data ,a single hash $y$.

\section{Algorithm to System Design and Implementation}
\label{sec:implementation}

ACeD clients are nodes associated with  a number of side blockchains; the honest clients rely on ACeD and the trusted blockchain to provide an ordering service of their ledger (regardless of any adversarial fraction among the peers in the side blockchain). A client proposes a new block to all peers in the side blockchain by running ACeD protocol. An honest client confirms to append the new block to the local side blockchain once the block hash is committed in the trusted blockchain and the full block is downloaded. As long as there is a {\em single honest} client in the side blockchain, we have the following claim:
\begin{claim}
Once a block is confirmed by an honest client, security is guaranteed as long as the trusted blockchain is safe, even if the oracle layer is dishonest majority. Liveness is guaranteed when the trusted blockchain is live and the majority of the oracle layer is honest.
\end{claim}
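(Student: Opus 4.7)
The plan is to decouple the claim into safety and liveness and prove each by combining the guarantees already established in Theorem~\ref{theorem-detail} with the binding property of the CIT commitment. I would begin by unpacking what ``confirmed by an honest client'' means in code: the client (i) has observed the commitment $c$ on the trusted blockchain, and (ii) has locally retrieved enough chunks to reconstruct a block $B$ and verify $B$ against $c$ through the Merkle-style authentication of the CIT. Both conditions are locally enforced and involve no trust in the oracle layer beyond what was needed to deliver chunks.

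For safety, suppose a second honest client confirms $B'$ for the same slot. Since the trusted blockchain is safe, both clients see the same commitment $c$ at that slot; since each client has locally verified its block against $c$ via hashing, the collision resistance of $\h$ forces $B' = B$. The crucial point — and what I would emphasize in the write-up — is that this chain of reasoning nowhere invokes an honest majority in the oracle layer: the oracle can only influence \emph{which} commitments reach the trusted blockchain and whether chunks remain retrievable, but once a commitment is fixed on a safe trusted blockchain, the binding property of the CIT root leaves only one block that any honest verifier can ever accept. This mirrors the Correctness argument already worked out in Section~\S\ref{sec:securityanalysis} and does not require $\beta < 1/2$.

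For liveness, I would appeal directly to the Termination and Availability parts of Theorem~\ref{theorem-detail}. Under the hypotheses that the trusted blockchain is live and $\beta < 1/2$, the inequality $\beta + \gamma \le 1-\beta$ lets the honest nodes alone meet the voting threshold, so every honest dispersal is eventually accepted and its commitment written on-chain (Termination). Once committed, at least a $\gamma$ fraction of honest oracle nodes still store their assigned chunks, and by Lemma~\ref{thm:dispersal} together with Lemma~\ref{lm:recon} and the choice $\eta \ge 1-\alpha$, the requesting honest client can reconstruct $B$ (Availability). Chaining these two steps yields downstream confirmation.

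The main obstacle, and the subtle point worth arguing carefully, is making the safety statement genuinely indifferent to the oracle-majority assumption. A dishonest oracle majority could conspire with a malicious proposer to push commitments of malformed blocks (for instance, a CIT whose base layer fails a parity check) onto the trusted blockchain. One must show that such a commitment can at worst cause an honest client to output $\emptyset$ accompanied by an incorrect-coding proof, and can never cause a spurious confirmation — i.e., the local verification step together with the fraud-proof mechanism of Section~\ref{sec:decoder} acts as a sound filter. Once this separation between ``what gets committed'' and ``what an honest client confirms'' is made precise, the remaining work collapses into direct quotation of Theorem~\ref{theorem-detail}.
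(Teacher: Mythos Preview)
Your proposal is correct and follows the same line as the paper's own justification, which is only a short informal paragraph immediately after the claim: safety rests on the commitment being fixed on a persistent trusted chain together with the honest client already holding the verified block (so the CIT root's binding, via collision resistance of $\h$, forces agreement), while liveness is reduced to the Termination and Availability guarantees of Theorem~\ref{theorem-detail} under honest oracle majority. If anything, you are more careful than the paper in isolating the filtering role of the local verification and incorrect-coding proof, which the paper leaves implicit.
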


The claim indicates that in side blockchains, the safety of a confirmed block only relies on the trusted blockchain because the commitment on it is irrefutable once the trusted blockchain is safe, and the honest client has already downloaded the full block. So even if the oracle layer is occupied by the dishonest majority, those confirmed blocks are still safe. However, the liveness relies on the liveness of both ACeD and the trusted blockchain. As for the side blockchain network, because data persistence is guaranteed by ACeD, any client inside the network can safely conduct a transaction and reconstruct the ledger without worrying about a dishonest majority; similarly a valid transaction will eventually join the ledger as long as there is a single honest client who can include the transaction into a block. 

Next we discuss the practical parameter settings for ACeD. We use these parameter choices to design and implement an ACeD oracle layer that interfaces with Ethereum as the trusted blockchain.

\noindent {\bf Parameter Specifications}. 
We study an ACeD system with $N=9000$ oracle nodes with adversarial fraction $\beta = 0.49$; the block size $b$ is $12$ MB and therefore $b \gg N$. In each block, the base layer symbol size $c$ is $2000 \log b \approx 48$ kB, which corresponds to $\frac{b}{c} \approx 256$ uncoded symbols in the base layer. Within the setup, we construct a CIT of five layers with parameters: number of root symbols $t=16$, hash size $y=32$ bytes, coding ratio $r=0.25$, aggregation batch size $q=8$ and $4$ erasure codes of size (256,128,64,32) for each non-root layer. For selecting erasure code properties, we use codes with undecodable ratio $\alpha=0.125$, maximal parity equation size $d=8$. In the dispersal protocol, we use $\eta = 0.875 = 1 - \alpha$, which translates to a dispersal efficiency $\lambda \le \frac{1-2\beta}{\log{\frac{1}{1-\eta}}} = 1/150$, and therefore each oracle node needs to store roughly 17 symbols. With ACeD characterized by those parameters, the total communication cost for a client to commit a $12$ MB block is roughly $5.38$ GB; this represents a $0.5N$ factor boost over storing just one block. In the normal path, after accepting the 12 MB block, each oracle node only has to store $448$ kB of data, a $3.7\%$ factor of the original data; if there is an incorrect-coding attack, oracle layer will together upload data of  size $339$ kB incorrect-coding proof to the trusted blockchain. To download a block in the normal path, a client can use a naive method of collecting all symbols from all oracle nodes. Despite the conservative approach at  block reconstruction, the entire download takes $5.38$ GB. A simple optimization involving selectively querying the missing symbols can  save significant download bandwidth:  a client only needs $896$ coded symbols in the base layer to reconstruct the block; thus, in the best case only $42$ MB is needed  to directly receive those symbols. When there is an incorrect-coding attack, a new user only needs to download the fraud proof which has been generated by other nodes (either client or oracle nodes); the proof is only of the size of $339$ kB.  Table~\ref{tab:comparison2} tabulates the performance metrics of ACeD (and baseline schemes) with these parameter settings. 

In the uncoded repetition protocol, each node stores the identical block, so the storage cost for both cases is 12MB. In the uncoded dispersal protocol, because all oracle nodes are honest, the block is equally divided into $N$ nodes in both cases. In the AVID protocol, each message has the size $\frac{b}{N-2N\beta} + y(\log_2N+1)=4.37$ kB, so the total amount of messages exchanged during the broadcasting stage equals to 354GB; AVID does not commit an invalid block, so only the block hash is required to show the block is invalid. When $\beta=0.33$ 1D-RS uses the same erasure code as AVID to assign each node a symbol, so their normal case is identical; when there is an incorrect-coding attack, a node needs to download the whole block to reconstruct the block, therefore in the worst case both storage and download are 13.4MB. To tolerate $\beta=0.49$, 1D-RS needs to decrease the coding ratio to 0.02 while increasing the symbol size, because 200 nodes are required to start the decoding. A lower coding ratio requires nodes to store more $67.1$ kB data; the download cost are saved by reducing the number of Merkle proofs, because only 200 symbols are needed for decoding. Finally, we use the same method to evaluate ACeD with $\beta=0.33$, both normal case storage and communication complexity improves due to a better dispersal efficiency. In evaluating the metrics for 2D-RS, we first get a lower bound for the number of honest oracle nodes available in the decoding phase, which is used to compute the size of a matrix used in 2D-RS, and the goal is to decide a 2D-RS symbol size which guarantees each honest oracle node with at least one symbol. The 2D-RS looks over-perform ACeD when the adversarial ratio is 0.33. But if the block size increases from 12MB to 1.2GB while the ratio is 0.33, the worst case in ACeD needs 279kB, whereas 2D-RS becomes 21.8MB; when the adversary ratio is 0.49, ACeD needs 279 KB for a 1.2GB block, and 2D-RS requires a fault proof of 92.3MB. The downside of ACed is the communication complexity when the block size is large, a 1.2GB block requires each node communicating 64MB data, whereas 2D-RS only requires 6.6MB communication.


\begin{table}[]
\caption{Performance metrics under a specific system parameterization.}
\centering
\begin{threeparttable}

\begin{tabular}{|l|l|c|c|c|c|c|}
\hline
\multirow{3}{*}{}    & maximal   & \multicolumn{2}{c|}{normal case }                              & \multicolumn{2}{c|}{worst case }                               & \multirow{3}{*}{\begin{tabular}[c]{@{}c@{}}communication \\ complexity\end{tabular}} \\ \cline{3-6}
                     & adversary & storage                       & download                      & storage                       & download                      &                                                                                      \\
                     & fraction  & \multicolumn{1}{l|}{cost\tnote{*}} & \multicolumn{1}{l|}{cost\tnote{*}  \tnote{$\dagger$}} & \multicolumn{1}{l|}{cost\tnote{*}} & \multicolumn{1}{l|}{cost\tnote{*}} &                                                                                      \\ \hline
uncoded (repetition) & 0.49       &     12MB               &         12MB                & 12MB                        & 12MB                        & 108GB                                                                              \\ \hline
uncoded (dispersal)  & 0       & 1.3kB                        & 12MB                        & 1.2kB                       & 12MB                       & 12MB                                                                               \\ \hline
AVID \cite{cachin2005asynchronous}                & 0.33       & 4.37kB                        & 13.4MB                        & 32B                        & 32B                        & 354GB                                                                              \\ \hline
1D-RS                & 0.33       &     4.37kB             &       13.4MB                     & 13.4MB                        &       13.4MB              & 39.4MB                                                                         \\ \hline
1D-RS                & 0.49       &     67.1kB             &       12.1MB                     & 12.1MB                        &       12.1MB              & 604MB                                                                         \\ \hline
2D-RS                & 0.33       &  5.4kB                &       16.6MB                     &      232.1KB                   & 232.1KB                    &           48.9MB                                                               \\ \hline
2D-RS                & 0.49       &   72.1kB               &     13MB                       & 925.6KB                        & 925.6KB                    &        648.6MB                                                              \\ \hline
ACeD                 & 0.33       &        50.3kB                 & 42MB                      & 339kB                   & 339kB                   & 452 MB                                                                          \\ \hline
ACeD                 & 0.49       &        597kB                 & 42MB                      & 339kB                   & 339kB                   & 5.38GB                                                                          \\ \hline
\end{tabular}
\begin{tablenotes}\footnotesize
\item[*] cost is derived by $ \frac{b }{N} \cdot \text{overhead}$
\item[$\dagger$] best case
\end{tablenotes}

\end{threeparttable}

\label{tab:comparison2}
\end{table}

The ACeD oracle layer interacts with the side blockchains and the trusted blockchain.  For a high performance (high throughput, low gas), the oracle layer and its interaction protocols have to be  carefully designed for software implementation.  In this section, we discuss our system design decisions that guided the implementation. 

\noindent {\bf Architecture}. 
We use Ethereum as the trusted blockchain. The  oracle nodes interact with  Ethereum via a special (ACeD) smart contract; the contract is owned by a group of permissioned oracle nodes, who individually cannot update the contract, except that a majority of the oracle nodes together  agree and perform a single action on the smart contract. The contract only changes its state after accepting a multisignature  of majority votes  \cite{breidenbach2017depth}. We implement both the oracle nodes and side blockchain nodes in {\sf RUST}; the architecture is depicted in Figure~\ref{fig:block-diagram}. There are four important modules:
\begin{figure}[hbt]
    \centering
    \includegraphics[width=0.9\textwidth]{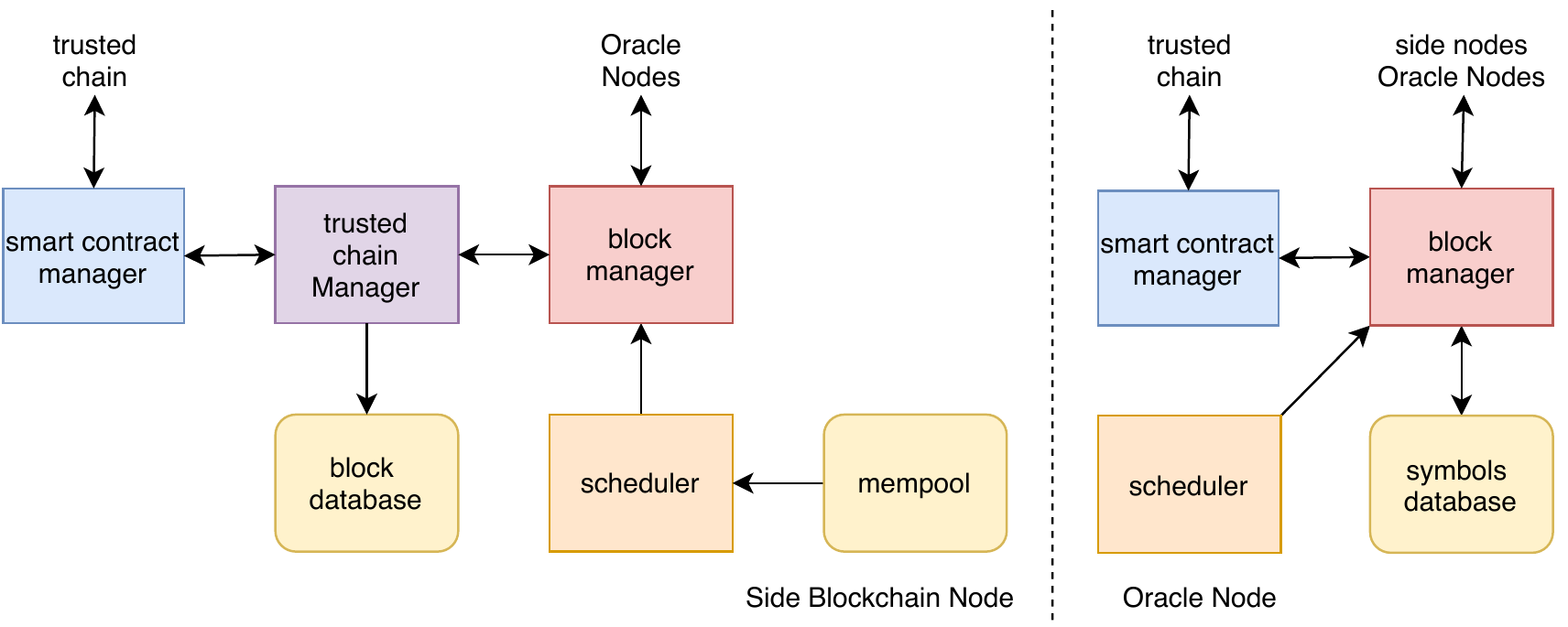}
    \caption{The left figure depicts a block diagram of a side node; the right figure depicts an oracle node. The sharp rectangle represents an active thread; round rectangles are system states.}
    \label{fig:block-diagram}
    \vspace{-0.3in}
\end{figure}

(1)  {\bf Smart contract manager}, when used by a side blockchain node, is responsible for fetching states from the trusted blockchain; when used by an oracle node, it writes state to the trusted blockchain after the block manager has collected sufficient signatures. 
(2)  {\bf Block manager}, when used by a side blockchain node,  creates a block (triggered by a scheduler), and distributes the coded symbols to oracle nodes; when used by an oracle node and has received messages from side blockchain nodes, the module stores the received symbols and provides them on demand; When used by an oracle committee node, the module collects signatures and send an Ethereum transaction once there are sufficient signatures. 
(3) {\bf Scheduler}, when used by a side blockchain node, decides when a node can transmit its block (and coded symbols) to oracle nodes; when used by an oracle node, the module decides who the correct block proposer is.   
(4) {\bf Trusted Chain manager}, when used by a side node,  periodically pulls state changes from the trusted blockchain; if a new block is updated, it collects block symbols from all oracle nodes; the module also maintains a blockchain data structure.

A detailed discussion of the design and implementation optimizations of these four blocks is provided in Appendix~\S\ref{sec:implementation-appendix}, which also discusses the data structures that maintain the state of ACeD. 

\noindent {\bf Implementation Optimizations}. 
The key challenge to a high performance implementation of  ACeD is the large number of interactions among the various system blocks. Since the trusted blockchain is immutable, we optimize the oracle and side blockchain  client implementations by utilizing parallelism and pipelining as much as possible. 
(1) {\bf{LDPC parallelization}}\ref{sec:erasure-code}: we bring state of the art techniques from wireless communication design and encoding procedures (Appendix A of \cite{richardson2008modern}) to our implementation of encoding a block by distributing the workload to many worker threads; we find that the encoding time is almost quartered using 4 threads. A  detailed description is referred to Appendix~\S\ref{apdx:Ldpc}.
(2) {\bf Block Pipelining}. We pipeline signature aggregation, block dispersal and retrieval so that multiple blocks can be handled at the same time. We also pipelined the trusted chain manager so that a side node can simultaneously request multiple blocks from oracle nodes. The local state variable, (block id, hash digest), is validated against the smart contract state whenever a new block is reconstructed locally. 

\section{Evaluation}
\label{sec:evaluation}
We aim to answer the following questions through our evaluation. (a) What is the highest throughput (block confirmation rate) that ACeD can offer to the side blockchains in a practical integration with Ethereum?  (b) What latency overhead does ACeD pose to block confirmation?  (c) How much is the gas cost reduced, given the computation (smart contract execution) is not conducted on the trusted blockchain? (d) What are the bottleneck elements to performance? 

{\bf{Testbed}}. We deploy our implementation of ACeD  on Vultr Cloud hardware of 6 core CPU, 16GB memory, 320GB SSD and a 40Gpbs network interface (for both oracle and side blockchain nodes). 
 To simplify the oracle layer, we consider all of oracle nodes are committee members, which is reflected in the experiment topology: all the oracle nodes are connected as a complete graph, and each side blockchain node has TCP connections to each of the oracle nodes.
 We deploy a smart contract written in {\sf Solidity} using the  {\sf Truffle} development tool on a popular Ethereum testnet: {\sf Kovan}. 
 Oracle nodes write their smart contracts using  Ethereum accounts  created with {\sf MetaMask},  each  loaded with sufficient Keth to pay the gas fee of each Ethereum transaction. 
 We use an LDPC code of $\frac{b}{c} = 128$ input symbols with a coding rate of $r=0.25$ to construct CIT with $q=8, d=8,\alpha=0.125,\eta=0.875, t=16$. We fix the transaction length to be 316 bytes, which is sufficient for many Ethereum transactions.  
 

\noindent {\bf  Experiment Settings}. We consider four separate experiments with varying settings of four key parameters: the number of oracle nodes, the number of side blockchain nodes, block generation rate, and  block size. The experiment results are  in Figure~\ref{fig:measurement}.
\begin{table}[htbp]
    \centering
    \scalebox{0.92}{
    \begin{tabular}{|c|c|c|c|c|}
         \hline
     &   \# side blockchain nodes & \# oracle nodes & Block size(MB) & Block generation rate(sec/blk) \\
     \hline
       A  & 5  & 5,10,15,25 & 4 & 5  \\
       \hline
       B & 5,10,20 & 10 & 4 & 5 \\
       \hline
        C & 5 & 10 & 4,8,16 & 5 \\ 
        \hline
        D& 3,5,8,10 & 10 & 4 & 8.33,5,3.125,2.5\\
        \hline
    \end{tabular}}
    \caption{Four different experiments  varying the parameters of ACeD. }
    \label{tab:my_label}
\end{table}

\noindent (1) {\bf Throughput}. We measure the  rate of state update in the trusted blockchain as viewed from each oracle blockchain nodes; the throughput is then the  rate averaged across time and across oracle blockchain nodes.  The throughput performance changes with four parameters:  
In experiments A and B, the  throughput is not noticeably impacted as the number of oracles or side blockchain nodes increases. In experiment C, the block size has roughly a linear effect on throughput, which continues  until coding the block is too onerous (either it costs too much memory or takes too long).  In experiment D, we fix the product of the block generating rate and the number of side blockchain node to constant, while increasing the block generation rate, the throughput increases linearly; the performance will hit a bottleneck when the physical hardware cannot encode a block in a time smaller than the round time.

\noindent (2) The {\bf latency} of ACeD is composed of three major categories:  block encoding time, oracle vote collection time and  time to update the trusted blockchain. 
We find latency stays relatively constant in all experiments (the exception is experiment C where   block encoding is onerous). 

\noindent (3) {\bf Gas saving}. ACeD transactions cost significantly less gas than a regular Ethereum transaction, independent of the computational complexity of the transaction itself. The cost of an ACeD transaction voted by 10 oracles nodes on a 4MB block costs on average 570K gas.  Based on a historical analysis of 977 Crytokitties transactions, we estimate a 220 byte transaction costing  roughly 180,000 gas: ACeD gas savings are thus over a factor 6000. We emphasize that the saving in gas is independent of block generation rate or the block size, since the smart contract only needs the block header and a list of the oracle signatures. 

\begin{figure}[hbt]
    \centering
    \includegraphics[width=0.75\linewidth]{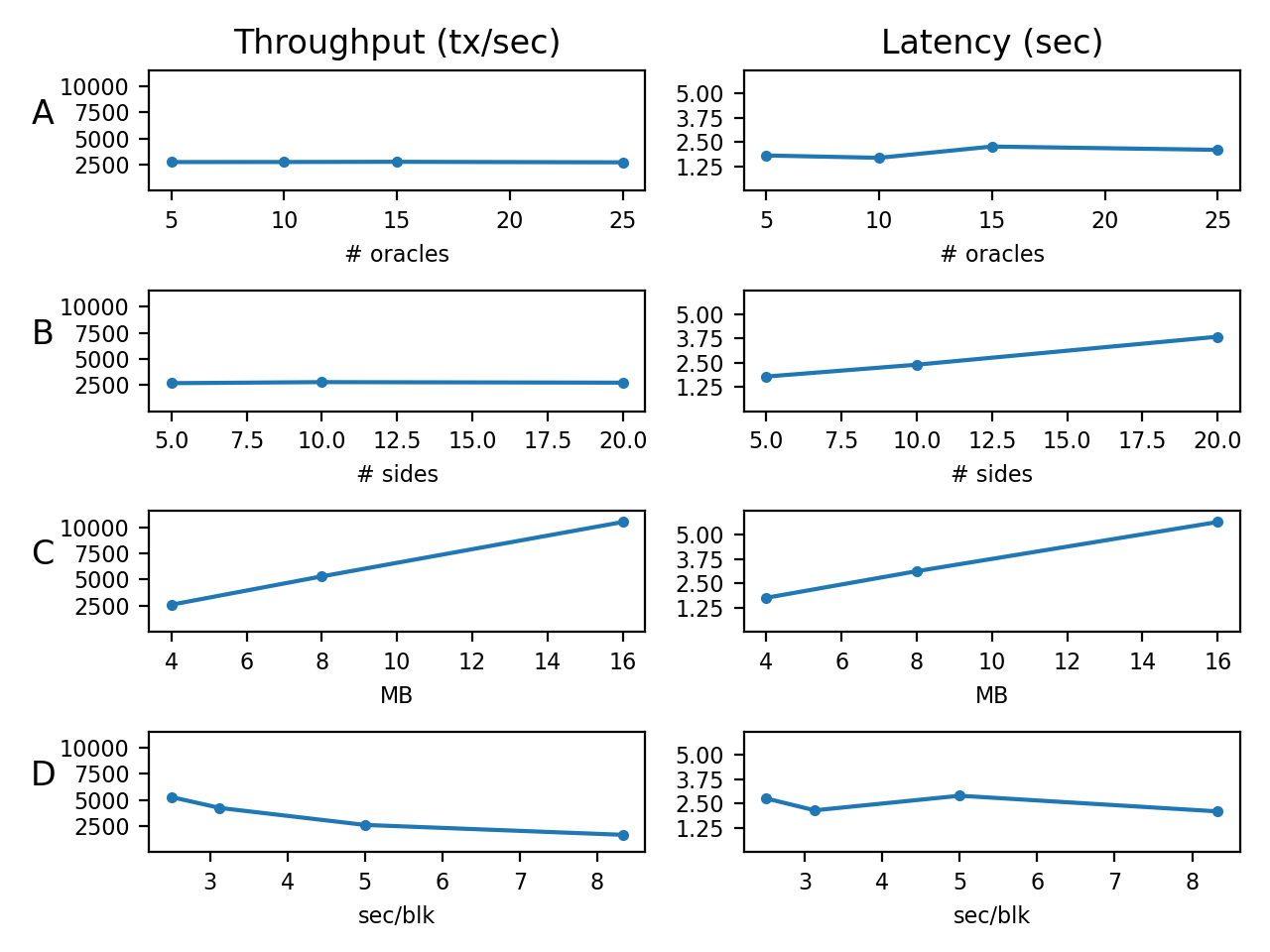}
    \vspace{-0.2in}

    \caption{Throughput (left) and Latency (right) for the 4 experiments A, B, C, D. }
    \label{fig:measurement}
\end{figure}





\section{Conclusion and Discussion}\label{sec:conclusion}
Interoperability across blockchains is a major research area. Our work can be viewed as a mechanism to transfer the  trust of an established blockchain (eg: Ethereum) to  many small (side) blockchains. Our proposal, ACeD, achieves this by building an oracle that guarantees data availability. The oracle layer enabled by ACeD is run by a group of permissioned nodes which are envisioned to provide an {\em interoperability service} across blockchains; we have provided a detailed specification of ACeD, designed incentives to support our requirement of an honest majority among the oracle nodes, and designed and evaluated a high performance implementation. 

ACeD solves the data availability  problem with near optimal performance on all metrics. However, when instantiating the system with specific parameters, we observe that some metrics have to be computed with constant factors which give rise to acceptable but non-negligible performance loss. The critical bottleneck for ACeD is the lower undecodable ratio (the fraction of symbols in the minimum stopping set) compared to 1D-RS coding;  this undermines dispersal efficiency and  increases the communication cost under the existence of strong adversary. Therefore, finding a {\em deterministic} LDPC code with a  higher undecodable ratio will immediately benefit the practical performance of ACeD; the construction of such LDPC codes  is an exciting direction of future work  and is of independent interest.     
\section{Acknowledgements}
This research is supported in part by a gift from IOHK Inc., an Army Research Office grant W911NF1810332 and by the National Science Foundation under grant CCF 1705007.


\appendix
\section{Proof of Lemmas and Theorems}

\subsection{Proof of Lemma~\ref{thm:dispersal}}
\label{sec:proofdispersal}

We show the existence of (several) chunk dispersal algorithms using the probabilistic method \cite{alon2004probabilistic}. 
Consider a randomized chunk dispersal design, where each element of each $A_i$ is chosen i.i.d. uniformly randomly from the set of $M$ possible chunks. This gives raise to a randomized code $\mathcal{C}$.  
We  prove the following statement:

\begin{equation}
    \mathbb{P}\{\mathcal{C} \text{ is NOT a } (M,N,\gamma,\eta,\lambda) \text{ code}\} \leq e^{-n}
\end{equation}

Let $k = M/(N \lambda)$ be the number of chunks at a given node. 

\begin{eqnarray}
    \mathbb{P}\{\mathcal{C} \text{ is not a valid code} \} & = & P( \exists S \text{ with } |S| = \gamma N:  |\bigcup_{i \in S} A_i| \leq \eta M) \\
    &\leq &
    \sum_{S \subseteq [M]: |S| = \gamma N}P( |\bigcup_{i \in S} A_i| \leq \eta M) \\
    &\leq &
    e^{N H_e(\gamma)}P( |\bigcup_{i \in S} A_i| \leq \eta M)
\end{eqnarray}
where we have used that $\binom{N}{\gamma N} \leq 2^{NH(\gamma)} = e^{NH_e(\gamma)} $, with $H(.)$ being the binary entropy function measured in bits (logarithm to the base $2$) and $H_e$ being the binary entropy measured in natural logarithm (nats), a standard inequality from the method of types \cite{csiszar1998method}.

The key question now is for a fixed $S$, we need to obtain an upper bound on $P( |\bigcup_{i \in S} A_i| \leq \eta M)$. 
We observe that under the random selection, we are choosing $ \gamma / \lambda M$ chunks (sampled randomly with replacement). The mathematical question now becomes, if we choose  $\rho M$ chunks randomly (with replacement)  from $M$ chunks, what is the probability that we get at least $\eta M$ distinct chunks. 
\begin{lemma} We sample $\rho M$ chunks from a set of $M$ chunks randomly with replacement. Let $Y$ be the number of distinct chunks. 
Let $x = (1-\eta) exp(\rho)$.

\begin{equation}
 \mathbb{P}(Y < \eta M) \leq  e^{- (1-\eta) \frac{(x-1)^2}{x(x+1)}}
\end{equation}

\end{lemma}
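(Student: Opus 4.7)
The plan is to recast the event $\{Y < \eta M\}$ in terms of the number of unsampled chunks and then apply a Chernoff-type tail bound. Concretely, let $Z_i$ be the indicator that chunk $i \in [M]$ is never chosen across the $\rho M$ trials, and set $Z = \sum_{i=1}^{M} Z_i = M - Y$. Then $\mathbb{E}[Z_i] = (1 - 1/M)^{\rho M} \le e^{-\rho}$, so $\mu := \mathbb{E}[Z] \le Me^{-\rho}$, and the event of interest is $\{Z \ge (1-\eta) M\}$. The quantity $x$ is designed precisely so that $(1-\eta)M/\mu \ge (1-\eta)e^{\rho} = x$, i.e.\ we are asking about deviations of $Z$ by a multiplicative factor $\ge x$ above its mean.

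Next I would invoke a multiplicative Chernoff bound on $Z$. The indicators $Z_i$ are not independent, but they are negatively associated (a standard fact about occupancy indicators when balls are thrown into bins), so the usual Chernoff inequality applies; alternatively, I would cite the occupancy-tail bound of Kamath–Motwani–Palem–Spirakis \cite{kamath1995tail} directly. The form I would use is
\[\mathbb{P}\bigl(Z \ge (1+\delta)\mu\bigr) \le \exp\!\left(-\frac{\delta^2 \mu}{2+\delta}\right), \qquad \delta > 0.\]
Setting $(1+\delta)\mu = (1-\eta)M$ gives $\delta \mu = (1-\eta)M - \mu$ and $(2+\delta)\mu = (1-\eta)M + \mu$, so the exponent collapses neatly into
\[\frac{\delta^2 \mu}{2+\delta} = \frac{\bigl((1-\eta)M - \mu\bigr)^2}{(1-\eta)M + \mu}.\]

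The final step is to substitute the bound $\mu \le Me^{-\rho} = M(1-\eta)/x$ into the numerator (lower bound) and the denominator (upper bound). The numerator becomes at least $\bigl(M(1-\eta)(x-1)/x\bigr)^2$ and the denominator at most $M(1-\eta)(x+1)/x$, and after cancellation one obtains the claimed exponent $(1-\eta)(x-1)^2/\bigl(x(x+1)\bigr)$ (with a factor $M$ that the paper appears to suppress in the statement but needs in the downstream union bound of Lemma~\ref{thm:dispersal}). The main obstacle, I expect, is not any single algebraic step but justifying the Chernoff inequality for the dependent indicators $Z_i$; I would handle this either by a one-line appeal to negative association or by quoting the explicit occupancy concentration result of \cite{kamath1995tail}, both of which sidestep the need to redo the moment-generating-function computation from scratch.
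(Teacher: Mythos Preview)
Your proposal is correct and essentially matches the paper's own argument: the paper also passes to $Z=M-Y$, writes $\mu=\mathbb{E}[Z]\approx Me^{-\rho}$, invokes the occupancy tail bound of \cite{kamath1995tail} in the equivalent form $P(Z>\ell\mu)\le e^{-\frac{(\ell-1)^2}{1+\ell}\mu}$ (which is your Chernoff bound with $\ell=1+\delta$), and sets $\ell=(1-\eta)e^{\rho}=x$. Your treatment is in fact slightly more careful, handling the direction of the inequality $\mu\le Me^{-\rho}$ explicitly and noting the missing factor $M$ in the stated exponent.
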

\begin{proof}
Let $Z = M-Y$ be the number of chunks not sampled. Then $\mu := \mathbb{E}[Z] = (1-\frac{1}{M})^{\rho M} \approx M e^{-\rho}$ (when $M$ is large). We can write $Z = \sum\limits_{i=1}^M Z_i$, where $Z_i$ is the binary indicator random variable indicating that chunk $i$ is not sampled. We note that a direct application of tail bounds for i.i.d. random variables does not work here. However, there are results analyzing this problem in the context of balls-in-bins and they show that the following tail bound does indeed hold (see Theorem~$2$ and Corollory~$1$ in \cite{kamath1995tail}). We can write the tail bound on $Z$ as follows: 
\begin{equation}
     P(Z > \ell\mu) \leq e^{-\frac{(\ell-1)^2}{1+\ell} \mu} \quad \forall \ell >1
\end{equation}
\end{proof}

Let $\bar{\eta} = 1-\eta$ and $\ell = e^{\rho}\bar{\eta}$, we get 
\begin{equation}
     P(Z > \bar{\eta} M) \leq e^{-\frac{(e^{\rho}\bar{\eta}-1)^2}{e^{\rho}(e^{\rho} \bar{\eta} +1)}M} \quad \text{if } e^{\rho} \bar{\eta} >1
\end{equation}

Define the function, $f(\eta,\rho) = \frac{(e^{\rho}\bar{\eta}-1)^2}{e^{\rho}(e^{\rho} \bar{\eta} +1)}$ (we note that $f(\cdot)$ is a positive function), we have $P(Y < \eta M) \leq e^{-f(\eta,\rho) \cdot M } $ if $\rho > \log (\frac{1}{1-\eta})$.

Continuing from before, we get 
\begin{eqnarray}
    \mathbb{P}\{\mathcal{C} \text{ is not a valid code} \} 
    &\leq &
    e^{N \cdot H_e(\gamma)-M \cdot f(\eta,\rho)} \text{ if } \frac{\gamma}{\lambda} > \log \left(\frac{1}{1-\eta} \right)
\end{eqnarray}
If we fix a $N$ and take $M$ large enough, we can make the right hand side large enough to make the probability that $\mathcal{C}$ is not a valid code vanish to as small as we want. This result has two interpretations: (1) as long as the probability of being a valid code is strictly greater than zero, then it proves the existence of a {\em deterministic} chunk distribution algorithm, (2) if we can make the probability of being a invalid code arbitrarily close to zero, then we can use the randomized algorithm directly with a vanishing probability of error. Fixing a $N$ and taking $M$ large can make the RHS arbitrarily small and thus the stronger second interpretation can be used.  This concludes the proof of the theorem.

\subsection{Proof of Lemma \ref{lm:recon}}
\label{proof-lm-recon}

\begin{proof}
By definition, $f_j(i)$ can be decomposed to two functions, $p_j(i)$ and $e_j(i)$. According to equation \ref{eq:pe}, $p_j(i)$ is computed by modulo, so there are at most $c = \frac{m_{\ell}}{rm_{j}} = \frac{m_{j} \cdot (qr)^{\ell-j}}{rm_{j}} = \frac{ (qr)^{\ell-j}}{r}$ base symbols mapping to one symbol in layer $j$ . In the worst case, $\gamma m_\ell$ distinct base layer symbols map to $\frac{\gamma m_\ell}{c} = \gamma r m_j$ distinct symbols in layer j in the range $[0, rm_j]$. Similarly, there are $d = \frac{M_{\ell}}{(1-r)M_{j}} = \frac{M_{j} \cdot (qr)^{\ell-j}}{(1-r)m_{j}}=\frac{(qr)^{\ell-j}}{1-r}$ base layer symbols mapping to one symbol at layer $j$, and therefore in the worst case there are $\frac{\gamma m_\ell}{d} = \gamma(1-r)m_j$ distinct symbols in the range $[rm_j, m_j)$. Since two range do not overlap, in total there are $\gamma m_j$ distinct symbols for each layer. Hence it completes the proof.
\end{proof}

\subsection{Proof of Lemma \ref{lm:sibl}}
\label{proof-lm-sibl}

\begin{proof}
First we show for every $i\in[m_\ell]$ and $2\le j < \ell $, $p_j(i)$ and $e_j(i)$ are sibling. According to the definition,
\begin{align}
	p_{j}(i) &= i \bmod rm_{j-1} = i \bmod qr^2m_{j-2}\\
	p_{j-1}(i) &= i \bmod rm_{j-2}  \quad \text{(by definition)}\\
		       &= p_j(i) \bmod rm_{j-2} \quad \text{(discuss below)}\label{eq:3}
\end{align}
To prove equation \ref{eq:3}, suppose $i = p_{j}(i) + xqr^2m_{j-2}$ and $ i = p_{j-1}(i) + yrm_{j-2}$ for some integer $x, y$. By rearranging terms,  $p_{j}(i) - p_{j-1}(i) = rm_{j-2}(y-xqr)$. If $p_{j}(i) - p_{j-1}(i) \neq 0$, we mod $rm_{j-2}$ on both sides, and therefore $p_{j-1}(i) = p_{j}(i) \bmod rm_{j-2}$. If $p_{j}(i) - p_{j-1}(i) = 0$, then obviously $p_{j-1}(i) = p_{j}(i) \bmod rm_{j-2}$. Therefore in any case, equation \ref{eq:3} holds, which means $p_{j-1}(i)$ is the parent of $p_{j}(i)$.

Then let's see another function $e_j(i)$, by definition we have
\begin{align*}	
	e_{j}(i) &= rm_{j-1} + (i \bmod (1-r) m_{j-1})\\
		   &= rm_{j-2} qr + (i \bmod (1-r) qr m_{j-2})
\end{align*}
To calculate its parent according to aggregation rules, we get similar result that $e_{j}(i) - p_{j-1}(i) = rm_{j-2}(y-xq(1-r))$ thus
\begin{align*}
	e_{j}(i)  &= 	i \bmod (1-r) qr M_{j-2}\\
 &= 	p_{j-1}(i) \bmod rm_{j-2} \\
\end{align*}
$p_{j-1}(i)$ is also the parent of $e_{j}(i)$. Hence it completes the proof.

\end{proof}

\section{Efficient Encoding for LDPC Codes}
\label{apdx:Ldpc}
A block of transactions is coded to symbols before distributing them to the oracle nodes. We have implemented a CIT library in {\sf RUST}; compared to CMT (coded Merkle tree) library, the major distinction is in the data interleaving algorithm. We find  that the encoding time of CMT is very time consuming: on a Mac 2.6 GHz 6-Core Intel Core i7 computer, the library required about 34 seconds to encode a 4 MB block on a basis of 128 symbols with 0.25 coding ratio. The issue is due to a huge amount of XOR operations required for encoding the base layer. In our implementation, we have developed two strategies to overcome the problem by adding parallelism to the encoding process and by devising a code transformation that reduces coding complexity. Parallelization involves two components: a light weight read-only hash map and a multiple-thread environment for partitioning the workload. An inverse map is created whose key is the index of input symbols, and value is a list of parity symbols to be XOR with according to the encoding matrix. We use a single-producer multiple-workers paradigm where the producer thread uses the inverse map to properly distribute the input symbol. Worker threads partition the parity set such that each thread is responsible for an equal number of parity symbols. The input symbol is communicated through a message queue, and once a worker receives the input symbol, it XORes the symbol locally. The producer  distributes all input symbols  and  sends a final control message to all workers to  collect the result. The optimization brings out a significant speedup,  reducing the encoding time of a 4MB block from 31 second to about 10 sec using 4 threads. Adding as many as 10 threads further reduces the encoding time to 8.5 seconds, and eventually hitting  physical (CPU) limits. In the second strategy, we attempted to change Coding matrix to reduce the number of XOR computation at the root. The conventional encoding complexity of LDPC codes is $O(n^2)$ where $n$ is the length of coded symbols. We could transform the original encoding matrix to a upper triangular form that has a computation complexity of $O(n)$. But after a careful analysis, we discover an extra constant factor overkills the reduced coding complexity, so in the end we rely solely on parallelism to reduce coding latency.

\section{Incentive Analysis}
\label{sec:incentiveanalysis}
\subsection{Motivation and Countermeasures} For oracle nodes, downloading and checking data requires nontrivial computation effort, selfish oracle nodes have an incentive to follow others' decisions and skip verification to compete for more reward by voting for more blocks. Such a phenomenon, known as {\em  information cascade} in the literature,  sets forth that people tend to make decisions according to the inferences based on the actions of earlier people. As a consequence, an oracle node may submit a vote for a block but can not provide any data when a retrieval request is received. To solve this problem, we introduce an audit scheme. When a block is added to the side blockchain, with probability $p_a$, a voted node will be selected to submit the received data chunks. Our dispersal protocol guarantees that each oracle node ought to store a specific subset of data chunks. For those committed blocks which are not successfully appended in the side blockchain, all voted nodes should submit data or lose their stakes. 

Even if there are several nodes in the aggregation committee, only the first one who submits the valid signature can claim the reward, thus rational committee members would choose to submit votes containing the first bath of signatures. And since only those oracle nodes whose signatures are used to aggregate the final signature would receive a reward, nodes with the largest network latency may never get any income. Furthermore, adversaries can violate an arbitrary subset of honest nodes' interests by excluding them intentionally. We can not influence the adversarial behavior, but to encourage selfish nodes to include more signatures, the committee reward will be proportional to the number of signatures in final aggregation.

\subsection{Model}
We consider a network to be a set of $N$ configured oracle nodes (oracle nodes) $O = \{o_1 , o_2, \cdots, o_N \}$.  Suppose $\beta N$ nodes in the network are \textit{Byzantine}, who behave arbitrarily. And that every non-Byzantine nodes are \textit{Rational}, they follow the protocol if it suits them, and deviate otherwise. In the registration phase, each oracle node needs to deposit an amount of stake $stk_o$ for registering a pair of valid BLS key set.

There is a set of blocks proposed by side nodes. To propose a block, the proposer needs to deposit an amount of stake $stk_b$ and will receive block reward $B$ if successful. A fraction of block reward $\eta B$ will be used for rewarding oracle nodes. There is a small committee $C \subset O$ in the oracle layer. If a block is committed, the committee member who submits the commitment will get an extra reward, which comes from a constant submission fee $r_m$ paid by other oracle nodes.
 
 Basically there are three general actions a node can take, cooperate, offline and defect, denoted as $D,O,C$ respectively. We define $\allc$, $\allc$ and $\alld$ representing the strategy that all nodes choose to be cooperative, offline and defective. We design the utility functions for the block proposer, oracle nodes and committee members when a block is successfully committed in Table.\ref{tab:utility}.

\begin{table}[]
    \centering
    \begin{tabular}{|c|c|c|c|}
    \hline
       strategy & block proposer &  oracle nodes & committee member\\ \hline
        $D$ & $-stk_b$ &$-p_a   stk_v + (1-p_a) \frac{\eta B}{k} - r_m$ &$-stk_m$ \\ \hline
        $O$ & $0$ & $0 $& $0$\\\hline
        $C $&$(1-\eta)B$ & $\frac{\eta B}{k} - r_m - c_s$ & $kr_m-c_m$ \\ \hline
    \end{tabular}
    \caption{Utility functions for different types of nodes. $k$ is the number of signatures aggregated in committed signature.}
    \label{tab:utility}
\end{table}

\subsection{Incentive Compatibility}
A protocol is incentive compatible if rational actors see no increase in their utility from unilaterally deviating from the protocol. To prove that, there are several things rational participants may care about.

For oracle nodes, there is a reward for having a block it successfully voted for added to the side blockchain. They cannot do much to influence the reward, except creating more voting accounts which means to store more data and deposit more stakes. For the block producer, there is a reward $(1-\eta)B$ when the proposed block is added to the side blockchain. Rational side nodes can't gain more rewards by deviating the protocol thus have no reason to violate the protocol. For the committee member, they claim rewards when a vote is accepted and more participants in aggregation bring more reward to the submission fee. 

We prove that $\allc$ strategy is a  Nash Equilibrium below. 

\begin{theorem}
\label{thm:incentive}
There exists a strategy $\{e_{i}^{*}\}_{i\in[1,n]}$ for all rational oracle nodes in system to reach Bayesian Nash Equilibrium: $\forall k\in [1,n]$, k is rational,
$$\mathbb{E}[U_{k}(e_k^*)|\{e_{i}^{*}\}_{i\in[1,n]}]\ge \mathbb{E}[U_{k}(e_k)|\{e_{i}^{*}\}_{i\ne k}, e_{k} ]$$
and the following statements hold:
\begin{enumerate}
    \item $\allo$ strategy is a Bayesian Nash Equilibrium.
    \item $\allc$ strategy is a Bayesian Nash Equilibrium.
\end{enumerate}
\end{theorem}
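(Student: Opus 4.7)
}
The plan is to verify the two equilibrium claims by a standard single-deviation argument: fix the profile that all other rational nodes follow (either \allo{} or \allc{}), and show that no unilateral deviation by a single node $k$ can strictly increase its expected utility. I will take the utilities in Table~\ref{tab:utility} as applying when the block is committed, and pair them with the off-path outcome that no reward, no submission fee, and no audit are triggered when the block fails to reach the $\beta+\gamma$ voting threshold; the expectations are over the placement of Byzantine nodes (hence ``Bayesian''), and by symmetry each rational node faces the same marginal payoff.

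First I would handle the \allo{} case, which is essentially trivial. If every other rational node is offline, then at most the $\beta N$ Byzantine nodes plus node $k$ itself can vote, which is strictly below the $\beta N+\gamma N$ acceptance threshold. Therefore no block is ever committed under any unilateral deviation by $k$. Deviating to $C$ incurs the storage cost $c_s$ and yields zero reward, while deviating to $D$ incurs no cost (since the block is not committed, there is no audit, and no submission fee is due). Hence $\mathbb{E}[U_k(O)\mid \{e_i^*=O\}_{i\neq k}]=0\geq \mathbb{E}[U_k(e_k)\mid \{e_i^*=O\}_{i\neq k}]$ for $e_k\in\{C,D\}$, establishing the \allo{} equilibrium.

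Next I would analyze the \allc{} case, which is the substantive step. Fix all other rational nodes at $C$; by the honest-majority assumption, the $\beta+\gamma$ threshold is met regardless of node $k$'s action, so every proposed block is committed. Comparing $k$'s three pure strategies using Table~\ref{tab:utility} gives
\begin{align*}
\mathbb{E}[U_k(C)] &= \tfrac{\eta B}{\kappa} - r_m - c_s,\\
\mathbb{E}[U_k(D)] &= -p_a\,stk_v + (1-p_a)\tfrac{\eta B}{\kappa} - r_m,\\
\mathbb{E}[U_k(O)] &= 0,
\end{align*}
where $\kappa$ denotes the expected number of aggregated signatures. The condition $\mathbb{E}[U_k(C)]\geq \mathbb{E}[U_k(D)]$ reduces to $p_a(stk_v+\tfrac{\eta B}{\kappa})\geq c_s$, which is exactly the purpose of the probabilistic audit: the audit penalty $p_a\,stk_v$ plus the forfeited reward $p_a\,\tfrac{\eta B}{\kappa}$ must dominate the saved storage/verification cost. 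The condition $\mathbb{E}[U_k(C)]\geq \mathbb{E}[U_k(O)]$ reduces to $\tfrac{\eta B}{\kappa}\geq r_m+c_s$, i.e., the per-vote reward must cover the submission fee plus verification cost. Both conditions are natural parameter regimes and I would state them explicitly as sufficient conditions for the theorem to hold; under them, $C$ is a best response and \allc{} is an equilibrium.

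The main obstacle I anticipate is bookkeeping rather than a deep difficulty: specifying precisely what happens off-path (non-committed blocks, audits of unelected nodes, and the rule for $\kappa$ when deviating), and verifying that Byzantine behavior does not shift the rational node's best response. I would resolve this by treating Byzantine actions as adversarial but still subject to the same $\beta+\gamma$ threshold, so that a single rational defection cannot flip the committed/not-committed status of a block; this decouples node $k$'s marginal incentives from the Byzantine strategy profile and reduces the analysis to the two inequalities above.
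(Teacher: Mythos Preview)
Your proposal is correct and follows essentially the same single-deviation argument as the paper: show that under $\allo$ a lone deviator cannot push the block past the voting threshold (so $O$ weakly dominates), and under $\allc$ compare the three payoffs from Table~\ref{tab:utility} to extract the two parameter conditions $p_a(stk_v+\tfrac{\eta B}{k})\ge c_s$ and $\tfrac{\eta B}{k}\ge r_m+c_s$. Your derivation of the $C$-versus-$D$ inequality is in fact cleaner than the paper's (which records the condition as $p_a(stk_v+1)>c_s$, an apparent slip); otherwise the structure is identical.
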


\begin{proof}
A protocol is a Byzantine Nash Equilibrium if rational actors see no increase in their utility from unilaterally deviating from the protocol. Firstly, consider $\{e_{i}^{*}\}_{i\in[1,n]} = \allo$, it can be easily derived that $\mathbb{E}[\allo] = 0$. Suppose any voter $k$ wants to change the protocol, no matter it chooses to cooperate or defect, it can not change the consensus thus gets no reward and even pays some cost for computation when choosing to cooperate, thus $\allo$ is a Bayesian Nash Equilibrium.

Suppose a block is eventually committed. The expected utility for $\allc$ is,
\begin{equation}
    \mathbb{E}[\allc] = \frac{\eta B}{k} - r_m - c_s
\end{equation}
The expected utility for $k$ to choose to offline and defect is,
\begin{equation}
    \mathbb{E}[e_{k}=O] = 0
\end{equation}
\begin{equation}
    \mathbb{E}[e_{k}=D] = -p_a stk_v + (1-p_a) \frac{\eta B}{k} - r_m 
\end{equation}
Letting $\mathbb{E}[\allc]>\mathbb{E}[e_{k}=D]$ and $\mathbb{E}[\allc]>\mathbb{E}[e_{k}=O]$ derives that $\allc$ strategy is a Bayesian Nash Equilibrium when $p_a(stk_v+1)-c_s > 0$ and $\frac{\eta B}{k} > r_m + c_s$. Since both $c_s$ and $r_m$ are constant, we can always choose stake value to make it reach the equilibrium.
\end{proof}

\begin{table}[t]
	\center
	\caption{List of symbols used in incentive analysis}
	\begin{tabular}{c|l}
		\hline
		\textbf{Symbol} & \textbf{Definition} \\ \hline 
		$r_m$ & Rewards per signature for a committee member\\
		$B_j$ & Block rewards for block $j$\\ \hline
		$c_s$ & Costs for oracle nodes to verify data chunks \\
		$c_m$ & Costs for committee members to aggregate signatures\\ \hline
	
		$stk_o$ & Unit stake of oracle nodes \\ 
		$stk_m$ &  Unit stake of committee members\\
		$stk_b$ & Stake of block proposer\\\hline
		$\beta$ & Proportion of byzantine nodes \\
		$\theta$ & Threshold of aggregated signature\\
		$\eta$ & Proportion of block rewards as block proposition cost\\
		$p_a$ & Probability to audit when a new block is added\\
		\hline
		\end{tabular}
\end{table}
\section{Design and Implementation of ACeD Modules}
\label{sec:implementation-appendix}

The {\bf Smart Contract manager} is an event loop which waits for the signal from other managers. Its main purpose is to communicate with the smart contract by either sending Ethereum transactions to write new states to the smart contract, or calling the smart contract to read the current state.

The {\bf Trusted Chain manager} is an event loop used by any side nodes for periodically pulling the latest state of the main chain by asking the smart contract manager. Every side node uses it to monitor any state update from the oracle nodes. An update occurs when the smart contract block id is higher than the block id maintained in the blockchain. Each side blockchain node queries the trusted blockchain periodically to be aware of the state change. When a new state is received, the side node queries all oracle nodes using the block id, for assembling the new block. It then checks Merkle proof in the header to check the block integrity. Then the module uses the new hash digest in the smart contract to verify the integrity of the header. The computation is done by hashing the concatenation of the previous hash digest and the hash digest of the new block header $H$; then compare it with the hash digest in the smart contract. The trusted blockchain manager continues until it completes to the latest state and has a consistent view of the blockchain compared to the states on the trusted blockchain. 

The {\bf scheduler} runs an endless loop which measures the current time to determine the correct side node proposer at its moment. It uses the current UNIX EPOCH to compute the time elapsed from a time reference registered at the contract creation, and divides Slot Duration, a system parameter, to get the current slot id. Therefore both oracle and side node can use slot id and the token ring to determine the valid block proposer at the current time. When a node proposes, its block id is set to the current slot id. After receiving block symbols using the dispersal protocol, oracle nodes verify the message sender using the identical scheduling rule. In the special situation while a block is pending in the network and a new block is proposed in the next slot, the pending block might be updated or rejected by the smart contract due to the block id rule constraint by the smart contract. Because a block id is accepted only if it is monotonically increasing, the releasing of the pending block does not hurt the system in either case: If the pending block arrives to smart contract after the new block, the pending block is excluded because it has a lower block id than the new block; if the pending block is included, there might be some conflicting transactions in the two blocks, but since the transactions in the pending block cannot be arbitrarily updated by adversaries after its release, the adversaries cannot adaptively damage the system. This property allows an optimization to improve the throughput (discussed later). 

The {\bf block manager} is an event loop that sends and receives message among peers. When an oracle node receives the header and coded symbols from a side node, it stores them into the symbol database, and submits its signature of the header to its Committee nodes after it receives a sufficient amount of symbols. If an oracle node is chosen as a committee node, the node receives the signatures and aggregates them until the number suggests that there is a sufficient amount of symbols stored in honest nodes. The block manager asks the smart contract manager to send an Ethereum transaction to write the new state to the smart contract.

The  states of ACeD are maintained in three data structures: 

\begin{enumerate}
	\item {\bf block database}, residing on persistent storage, is used by Side nodes to store decoded blocks reconstructed from the oracle nodes
	\item {\bf symbols database}, residing on persistent storage, is used by oracle nodes to store coded symbols from the Side nodes.
	\item {\bf mempool}, transactions queued in memory to create a block. 
\end{enumerate}

The block database is a key-value persistent storage, where the key is block id and the value is a block $B = (H,C)$ which contains a header $H$ and content $C$. The header is a tuple of $(block id, D)$ where D is the digest of the content $C$. Instead of relying on a cryptographic puzzle for accepting blocks, each oracle node uses a deterministic token ring available from the smart contract to decide which block symbols to accept. The token ring is a list of registered side chain nodes eligible for proposing blocks. A block is considered valid only if it is received at the correct time from the correct side node. 

\section{Performance table}
\label{sec:performance-table}

 We compared five data dispersal protocols and analyzed their performances using maximal adversarial fraction and communication complexity; we further articulate the storage overhead for the oracle layer and download overhead for each client in two cases depending on whether there are invalid coding attacks in the system. We define those four key metrics of system performance; see Table \ref{tab:metrics}. Let $N$ be the number of nodes and $b$ be the block size.  ``Maximal adversary fraction" measures the fault tolerant capability of the model. ``Storage overhead" measures the ratio of total storage cost and actual information stored. ``Download overhead" measures the ratio of downloaded data size and reconstructed information size. ``Communication complexity"  measures the number of bits communicated

In the uncoded repetition data dispersal protocol, a block proposer client sends the original block to all oracle nodes to ensure the data availability. The system needs more than $1/2$ fraction of the votes for committing a block digest of constant $y$ bits into the trusted blockchain. The total amount of bits communicated equals to $Nb$, where $b$ is block size and $N$ is the total number of oracle nodes. Hence the communication complexity is $Nb$. In the normal case, when a block is coded correctly, the storage overhead is $Nb/b$ whereas the download overhead is $b/b$. When an adversary uploads an invalid block by distributing different blocks to each node, the worst case for storage overhead is identical to the normal case, because every oracle nodes have to store the data block to make sure the data is available when later adversaries oracle nodes decide to make their block available. The download efficiency is $O(1)$ because as long as a client contacts an honest oracle node, it is sufficient to check the data availability. 

In the uncoded data dispersal protocol, the block proposer client divides the original block into $N$ chunks, one for each oracle node. The system cannot tolerate any adversary because any withheld chunk can make the block unavailable. The total communication complexity is $b$ since a block is equally distributed. In both normal and worst case, the storage overhead is $b/b$ because every node only store $b/N$ chunks. Similarly, in both normal and worst cases, the download overhead is $b/b$ because a client downloads at most $b/N$ chunks from $N$ oracle nodes.

In the 1D-RS data dispersal protocol, at least $1/2$ fraction of oracle nodes need to be honest for them to agree to commit a block digest. The communication complexity is $O(b)$ using a chunk dispersal protocol discussed in section~\ref{sec:aced}. In the normal case, each node only stores $O(b/N)$ number of chunks, and therefore the storage overhead is $O(1)$; similarly when a client node wants to download the block, it collects all chunks from the oracles node with download overhead $O(1)$. In the worst case, when an adversarial client disperses chunks with invalid coding,  oracle nodes in the end would produce an invalid coding proof for proving that the block is invalid to all querying clients. In 1D-RS, the fraud proof consists of all chunks as one needs to decode itself to detect inconsistency. Therefore, the overall storage is $O(N (b/N + \log b))$, when $b$ is large $(b/N >> \log b)$, the overall coding proof size is $O(b)$. For computing the storage overhead, we consider the total information stored in the denominator to be the size of the block digest, which is used by a client to start the block retrieval. When a client requests for an invalid coded block, the client needs to download the incorrect-coding proof to convince itself that the block is invalid, hence the download overhead is the same as the storage overhead.

In the 2D-RS data dispersal protocol, at least $1/2$ fraction of nodes needs to be honest. The communication includes the chunks that make up the entire blocks and Merkle trees, and $O(\sqrt{b})$ of Merkle proof, so in total it takes $O(N\sqrt{b}+b+N\log{b}) = O(b)$ bytes. In the normal case, each node only stores $O(b/N)$ bytes, therefore the storage overhead is $O(1)$, similarly the download overhead is $O(1)$. In the worst case, when an adversary disperses invalid coded chunks, each node needs to download $\sqrt{b}\log b$ bytes for producing the invalid code proof, according to Table 2 of \cite{albassam2019fraud}.

In the AVID protocol, oracle nodes run asynchronously with a security threshold $1/3$ \cite{cachin2005asynchronous}. During the block dispersal, every oracle nodes needs to recollect sufficient symbols for reconstructing the original blocks, hence the total communication complexity is $O(Nb)$. In the normal case, after a block is checked valid at the dispersal phase, each node discard chunks except the ones sent from the block producer, hence the storage overhead is $O(1)$; similarly when another client downloads the data, it only needs to download $O(b)$ data, hence its download overhead is $O(1)$. When there is an invalid encoding attack , it can be detected by AVID, and then the nodes would discard the block. Therefore in the worst case, the storage overhead of AVID has the same performance as its normal case. Similarly, since AVID can detect invalid encoding before committing to the trusted blockchain, the worst case download overhead is $O(1)$, the same as its normal case.

In the ACeD dispersal protocol, at least $1/2$ fractions of oracle nodes need to be honest to commit the block digest. The communication complexity is $O(b)$ by using the dispersal protocol in Section~\ref{sec:aced}. In the normal case, each node only stores $O(b/N)$ chunks hence the storage overhead is $O(1)$, and similarly the download overhead is $O(1)$. In the worst case, when an invalid block is dispersed, oracle layer generates an incorrect-coding proof of size $O(\log b)$ as shown in Theorem \ref{theorem-main}. The incorrect-coding proof is then stored in the trusted blockchain. When a client queries for such block, the proof is replied so that the client can be convinced that the block is invalid. For computing the storage and download overhead in the worst case, we consider both the stored information and the size of the reconstructed data to be a single hash which is the pointer to the data.

\section{Erasure code}
\label{sec:erasure-code}

An erasure code encodes a string of input bytes into a string of output bytes which can tolerate missing bytes or check if bytes are modified by using some special decoding algorithms. Suppose we want to encode a block of $b$ bytes, first we divide $b$ bytes into $k$ symbols(a fixed number chunks of bytes), each of $\frac{b}{k}$ bytes, some padding is used if they do not divide perfectly; then we choose an appropriate code which transforms $k$ symbols to $n>k$ symbols for supporting erasure resistance. This ratio $\frac{k}{n}$ is called coding ratio, and usually it is a constant fraction like $\frac{1}{2}$, $\frac{1}{4}$. Many codes offer erasure resistance property including 1D-RS, 2D-RS\cite{albassam2019fraud}, but we choose LDPC (linear density parity check) code for its special properties discussed below. LDPC decodes very efficiently by possibly sacrificing the encoding efficiency; in the encoding process, we apply $n$ linear operations over various subsets of $k$ symbols for generating $n$ coded symbols, the operations can be represented as a encoding matrix; while at decoding side, only a few symbols out of $n$ coded symbols are required to check the validity of those symbols; this can be checked by computing if they produce a zero vector after a linear operations, which is captured in a parity equation. We shown in the section 5 that the scheme we are using with LPDC only use number of symbol of roughly $O(\log b)$ bytes, whereas both 1D-RS and 2D-RS requires the decoder to download all $b$ bytes and $O(\sqrt{b} \log b)$ bytes to run the decoding algorithm assuming symbol size is constant. Erasure Code provides erasure resistance, but there is a limit on how many modified or missing symbols can be tolerated. To capture this property, every erasure code has a parameter called undecodable ratio. The set of symbols that meets the undecodable ratio is called a stopping set, and there might be multiple of such set. In the CMT\cite{yu2020coded} paper, the undecodable ratio is 0.125; both 1D-RS and 2D-RS can produce codes that tolerate any ratio of missing symbols. To compensate high encoding complexity, in Section \ref{sec:implementation} and Appendix \ref{sec:implementation-appendix} we discussed possible ways to parallelize LPDC encoding algorithm for a faster block generation rate. A formal introduction to LDPC can be found at \cite{richardson2008modern} chapter 3.
\section{Bad code handling}
\label{sec:bad-code}

The LDPC codes are probabilistically generated, and some codes may have smaller stopping size (see Appendix \ref{sec:erasure-code}) than designed due to randomness in the generation process. The CMT \cite{yu2020coded} paper section 5.4 table 3 provides the likelihood of such event that a bad code is generated: when the number of symbols, n, in a layer is 256, the probability of bad code is 0.886; when n=512, prob=5.3e-2; when n=1024, prob=2e-3. The calculation process can be found at \cite{yu2020coded} Appendix A. We layout the following protocol for explicitly handling such bad code.

A bad code handling protocol is triggered by an honest side blockchain node who is unable to reconstruct a block after receiving a sufficient number of valid chunks from the oracle layer. The honest side blockchain node sends a bad code message to all oracle nodes containing the block chunks for demonstrating the code is bad. Then all honest oracle nodes communicate with each other, and confirm if the code is bad by exchanging their local chunks to reconstruct the original block. If the code is bad, honest oracle nodes either reconstruct the block but with more chunks or fail to reconstruct the block. The honest oracle nodes then vote and write a proof to the main chain so that a new code will be accepted. The generating process of new codes can be done at each oracle node using a commonly agreed seed, which can be setup in the main chain after bad code proof is voted and accepted. The oracle node then communicates and votes for the new code, and eventually a valid new code is committed to the main chain.

%
%
%

\end{document}